\newcommand{\blind}{0}
\def\E{\mathbb{E}}
\def\R{\mathbb{R}}
\def\I{\mathbf{I}}
\def\P{\mathbb{P}}
\def\X{\bm{X}}
\def\L{\bm{L}}
\def\Y{\bm{Y}}
\def\N{\mathcal{N}}
\def\F{\mathcal{F}}
\newcommand{\boldeps}{\bm{\varepsilon}}
\newcommand{\boldalpha}{\bm{\alpha}}
\newcommand{\boldmu}{\bm{\mu}}
\newcommand{\boldp}{\bm{p}}
\newcommand{\abs}[1]{\left\lvert #1 \right\rvert}
\newcommand{\norm}[1]{\left\| #1 \right\|}
\newcommand{\cov}{\mathrm{Cov}}
\newcommand{\var}{\mathrm{Var}}
\newcommand{\given}{\hspace{1pt}\vert\hspace{1pt}}
\newcommand{\multi}{\mathrm{Multinomial}}
\numberwithin{equation}{section}  
\newtheoremstyle{general}
{3mm} 
{3mm} 
{\it} 
{} 
{\bfseries} 
{.} 
{.5em} 
{} 
\theoremstyle{general}
\newtheorem{lemma}{Lemma}
\newtheorem{theorem}{Theorem}
\newtheorem{proposition}{Proposition}
\newtheorem{assumption}{Assumption}
\newtheorem{remark}{Remark}
\renewenvironment{proof}[1][\proofname]{\par
	\pushQED{\qed}%
	\normalfont \topsep6\p@\@plus6\p@\relax
	\trivlist
	\item\relax{
		\bfseries
		#1\@addpunct{.}}\hspace\labelsep\ignorespaces
}{%
	\popQED\endtrivlist\@endpefalse
}
\begin{document}
	
	{\large 
		\noindent {\bf Title:} Forecasting Elections from Partial Information Using a Bayesian Model for a Multinomial Sequence of Data
		
		\vspace{0.1in}
		\noindent {\bf Running title:} A Bayesian model to forecast elections from partial information
		
		\vspace{0.1in}
		\noindent {\bf Authors:} Soudeep Deb, Rishideep Roy, Shubhabrata Das
		
		\vspace{0.1in}
		\noindent {\bf Affiliation:} 
		
		\noindent Indian Institute of Management Bangalore \\ Bannerghatta Main Road, Bangalore, KA 560076, India. \\
		School of Mathematics, Statistics and Actuarial Science, University of Essex \\ Wivenhoe Park, Colchester, Essex, CO4 3SQ, UK. 
		
		\vspace{0.1in}
		\noindent {\bf Corresponding author:} 
		
		\noindent Soudeep Deb \\
		Indian Institute of Management Bangalore \\ Bannerghatta Main Road, Bangalore, KA 560076, India. \\
		Email: soudeep@iimb.ac.in.
	}
	
	\vspace{0.1in}
	\noindent {\bf Data availability statement:}
	
	\noindent Two datasets are used in this paper. They have been downloaded from \url{https://eci.gov.in/files/file/12787-bihar-legislative-election-2020/} and \url{https://dataverse.harvard.edu/file.xhtml?fileId=4788675&version=8.0}.
	
	\vspace{0.1in}
	\noindent {\bf Declaration of interest:}
	
	\noindent The authors declare no conflict of interest.
	
	\newpage
	\setcounter{page}{1}
	
	\def\spacingset#1{\renewcommand{\baselinestretch}%
		{#1}\small\normalsize} \spacingset{1}

	\if0\blind
	{
		\title{\bf Forecasting Elections from Partial Information Using a Bayesian Model for a Multinomial Sequence of Data}
	\author[1]{Soudeep Deb\thanks{Corresponding author. Email: soudeep@iimb.ac.in}} 
	\author[2]{Rishideep Roy\thanks{Email: rishideep.roy@essex.ac.uk}}
	\author[3]{Shubhabrata Das\thanks{Email: shubho@iimb.ac.in}\hspace{.2cm}} 
	\affil[1,3]{Indian Institute of Management Bangalore \\ Bannerghatta Main Road, Bangalore, KA 560076, India.}
	\affil[2]{School of Mathematics, Statistics and Actuarial Science\\University of Essex\\ Wivenhoe Park, Colchester, Essex CO4 3SQ, UK.}
	\maketitle
} \fi

\if1\blind
{
	\bigskip
	\bigskip
	\bigskip
	\begin{center}
		\title{\bf Forecasting Elections from Partial Information Using a Bayesian Model for a Multinomial Sequence of Data}
\end{center}
\medskip
} \fi

\begin{abstract}
Predicting the winner of an election is of importance to multiple stakeholders. To formulate the problem, we consider an independent sequence of categorical data with a finite number of possible outcomes in each. The data is assumed to be observed in batches, each of which is based on a large number of such trials and can be modelled via multinomial distributions. We postulate that the multinomial probabilities of the categories vary randomly depending on batches. The challenge is to predict accurately on cumulative data based on data up to a few batches as early as possible. 
On the theoretical front, we first derive sufficient conditions of asymptotic normality of the estimates of the multinomial cell probabilities and present corresponding suitable transformations. Then, in a Bayesian framework, we consider hierarchical priors using multivariate normal and inverse Wishart distributions and establish the posterior convergence. The desired inference is arrived at using these results and ensuing Gibbs sampling. The methodology is  demonstrated with election data from two different settings --- one from India and the other from the United States of America. Additional insights of the effectiveness of the proposed methodology are attained through a simulation study.  
\end{abstract}

{\it Keywords: Election data, Gibbs sampling, Hierarchical priors, Posterior convergence, Forecasting from partial information}.

\newpage

\section{Introduction}
\label{sec:intro}

Voter models have been of interest for a long time.  With the strengthening of democracy and the role of media, statistical models for the prediction of elections have been on the rise around the world. Elections are studied at different stages -- starting with pre-poll predictions, going on to post-poll predictions, and also calling the election before the final result is declared as the counting of ballots is in progress. The current work is focuses on the last context, where the counting of results is publicly disclosed in batches (or rounds). Stemmed from public interest, this often leads to competition among the media houses ``to call the election", i.e.\ to predict the outcome correctly and as early as possible.

Using the exit poll data has been a popular approach in this context. It however has its own pitfalls, as has been shown in many papers (see, e.g.\ \cite{stiers2018affect}). Because of the inadequacy of the exit poll data, Associated Press (AP) designed its methodology for both survey and forecasting (\cite{APnews}). Interestingly, they still do not call a closely contested race unless the trailing candidate cannot mathematically secure a victory. Another common approach is to extract relevant information from social media, such as Twitter or Google Trends, and to use that to predict the winner of an election. \cite{o2010tweets} and \cite{tumasjan2010predicting} are two notable works in this regard. A major criticism behind this approach is that intentional bias in the social media posts often lead to error-prone conclusions in the forecasting problems (\cite{anuta2017election}). Further, \cite{haq2020survey} provides a great review of the existing literature on how to call an election and one can see that most of the methods are either based on data from social media or are ad-hoc and devoid of sound statistical principles. To that end, it would be of paramount importance to develop a good forecasting methodology based only on the available information of votes secured by the candidates in the completed rounds of counting. This can be treated as a collection of multinomial data where the number of categories is equal to the number of candidates and the random fluctuations in cell probabilities point to the randomness of the vote shares in different rounds.

Multinomial distribution is the most common way to model categorical data. In this work, we study the behaviour of a collection of multinomial distributions with random fluctuations in cell (or `category', as we interchangeably use the two terminology) probabilities. We consider  situations where this  multinomial data is observed in batches, with the number of trials in each batch being possibly different. The cell probabilities for every multinomial random variable in the data are assumed to be additive combinations of a fixed component which is constant across batches, and a random perturbation. These perturbations for different batches are taken to be independent. Before delving deeper into the model and related results, we present an interesting application where the proposed setup can be utilized effectively. 

\subsection{Motivating Example}

This study is largely motivated by the forecasting results in political elections during the process of counting of votes; in particular, we look at one such instance in India -- the electoral data from the legislative assembly election held in Bihar (a state in India), during October-November of 2020. There were two major alliances -- the National Democratic Alliance (NDA), and the {\it Mahagathbandhan} (MGB). The counting of votes was declared in different rounds, and the contest being very close, there were widespread speculated forecasts in various media for nearly 20 hours since the counting began, until the final results were eventually known. 


To illustrate the challenges of this forecasting exercise, let us consider Hilsa and Parbatta, two of the 243 constituencies in Bihar where the contests were among the closest. In both of them, like in most other constituencies in the state, the fight was primarily confined between NDA and MGB. For both Hilsa and Parbatta, the counting of votes was completed in 33 rounds. The number of votes counted in the different rounds varied greatly (between two to seven thousand, except the last one or two rounds which typically had less votes). In \Cref{fig:bihar-graph-intro}, we depict the number of votes by which the candidates led (roundwise numbers are given on the left panel, cumulative numbers are given in the right panel). For instance, we observe that among the first 10 rounds of counting for Hilsa, the NDA candidate led in the sixth, eighth and ninth rounds, while the MGB candidate led in counting of votes in the remaining seven rounds. Considering cumulative vote counts in Hilsa at the end of successive rounds, lead changed once; that was in the final round with NDA winning by meager 12 votes. In contrast, the lead changed as many as five times in Parbatta, viz.\ in the 16th, 20th, 23rd, 26th and 31st rounds, before NDA claimed the victory with a margin of 951 votes.





\begin{figure}[!hbt]
\centering
\includegraphics[width = \textwidth,keepaspectratio]{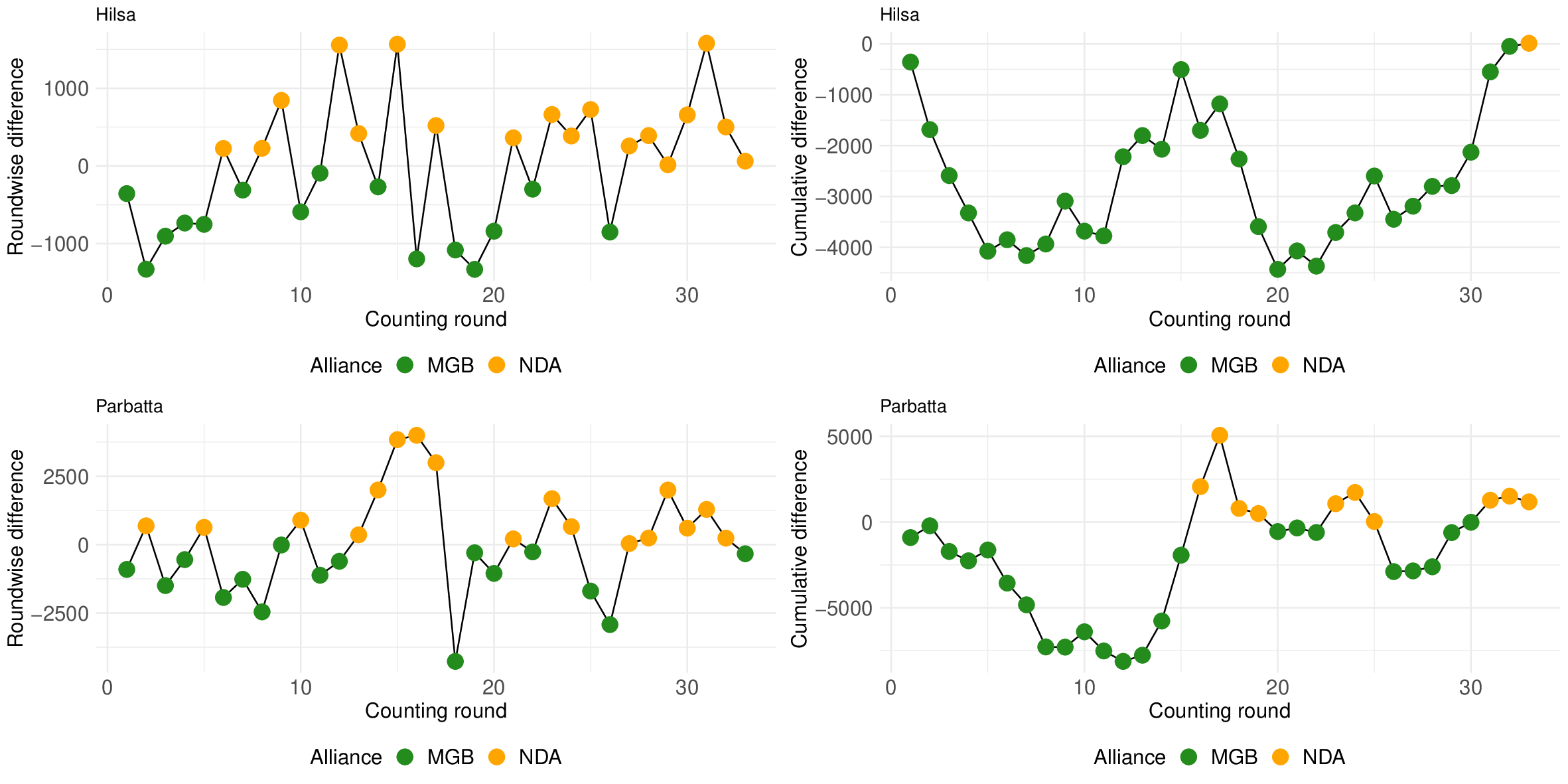}
\caption{Round-wise margin of differences in votes (left) and cumulative margins of votes (right) for the two main alliances in Hilsa and Matihani, two most closely contested constituencies.}
\label{fig:bihar-graph-intro}
\end{figure}

The above clearly demonstrates the difficulty in forecasting the winner (candidate who eventually gets the highest number of votes from all rounds of counting) of individual constituencies, as early (after as few rounds of counting) as possible. Our goal is to construct a method that relies on sound statistical principles to predict the winner for each seat based on trends, thereby making a call about which alliance is going to secure a majority. 

We note that the methodology can be adapted to other political applications as well. For example, in the context of the presidential election of the United States of America (USA), one may consider the data from a few counties to project the winner of the respective state. We illustrate it later in this paper, although in a limited capacity because of not having complete information on the actual timing when the county-wise results were progressively declared. 

\subsection{Related literature and our contribution}

Literature on a non-identical sequence of categorical data is somewhat limited. In the binomial setting (i.e.\ with two categories of outcome), \cite{le1960approximation} introduced the concept of independent but non-identical random indicators, which was later discussed and applied in a few other studies (see, e.g.\ \cite{volkova1996refinement}, \cite{hong2009prediction}, \cite{fernandez2010closed}). However, in these papers, the cell probabilities are considered to be different but non-random. Introducing the randomness in the probabilities makes it more attractive and suitable for many real life applications. To that end, \cite{cargnoni1997bayesian} proposed a conditionally Gaussian dynamic model for multinomial time series data and used it to forecast the flux of students in the Italian educational system. This model has later been adopted and modified in different capacities, see \cite{landim2000dynamic}, \cite{terui2010finding}, \cite{quinn2010analyze}, \cite{terui2014multivariate} for example. 

In this paper, we develop an appropriate framework for the above-mentioned setup under mild assumptions on the behaviour of the random perturbations. The model is implemented through Bayesian techniques in a setting similar to  \cite{cargnoni1997bayesian}. We complement their work by providing relevant theoretical results of the proposed method and describe how the model can be used to forecast and infer about different features of the entire data. Next, in line with the motivations discussed earlier, in addition to simulation, we demonstrate with two 
contextual applications of our proposed methodology. First one is related to election prediction where we use the data from Bihar 
legislative assembly election held in 2020. 
The second 
is for US election; although in absence of availability of  sequencing of the actual data, we consider different hypothetical possibilities. 
To the best of our knowledge, in the context of these applications, the present work is the first attempt to utilize a randomly varying probability structure for a sequence of non-identical multinomial distributions. We emphasize that the goal of this study is to forecast aspects of the final outcome rather than inference on the multinomial cell probabilities, although the latter can also be easily done through the proposed framework.

\subsection{Organization}

The rest of the paper is arranged as follows. In \Cref{subsec:preliminaries}, we introduce the preliminary notations, and in \Cref{subsec:setup-main}, we describe the model in a formal statistical framework, and state the main results. The proofs of the theorems are provided in \Cref{sec:proofs}. In \Cref{subsec:gibbs}, we discuss the estimation of the posterior distribution of the model parameters via Gibbs sampling. The prediction aspects of the model are described in \Cref{subsec:prediction}. \Cref{sec:simulation} discusses simulation results which demonstrate the suitability of the model in the broad context. Real data application in an Indian context is presented in \Cref{sec:Bihar_election}. We also present a hypothetical case study from American politics in \Cref{sec:usa_application}. Finally, we conclude with a summary, some related comments and ways to extend the current work in \Cref{sec:conclusion}.


\section{Methods}
\label{sec:methods}

\subsection{Preliminaries}
\label{subsec:preliminaries}

In this section, we describe the notations and assumptions, and then for better understanding of the reader, draw necessary connections to the previously discussed examples.

At the core of the framework, we have a sequence of independent trials, each of which results in one among the possible $C$ categories of outcomes. The outcomes of the trials are recorded in aggregated form, which we refer to as batches (or rounds). We assume that the cell probabilities remain the same within a batch, but vary randomly across different batches. Hence, the observed data from an individual batch can be modelled by the standard multinomial distribution; and collectively we have a sequence of independent but non-identical multinomial random variables. 

We adopt the following notations. Altogether, the multinomial data is available in $K$ batches.  For each batch, the number of trials is given by $n_j$, $1\leqslant j \leqslant K$, and each $n_j$ is assumed to be large. Let $N_j=\sum_{i=1}^j n_i$ denote the cumulative number of trials up to the $j^{th}$ batch. For simplicity, let $N=N_K$, the total number of trials in all the batches. We use $\X_{j}=(X_{1j}, X_{2j}, \ldots,X_{Cj})^T$ to denote the observed counts for the $C$ different categories in case of the $j^{th}$ multinomial variable in the data. Also, let $\Y_{j}=(Y_{1j}, Y_{2j},\ldots, Y_{Cj})^T$ be the cumulative counts till the $j^{th}$ batch. Clearly,
\begin{equation}
\label{eq:definition_y}
\sum_{i=1}^j X_{ci} = Y_{cj}, \quad  \text{for }c \in \{1,2,\ldots, C\}.
\end{equation}

Let $p_{1j}$, $p_{2j}$,\ldots, $p_{Cj}$ denote the probabilities for the $C$ categories for each of the independent trials in the $j^{th}$ batch. We focus on $\{ p_{1j}, p_{2j}, \ldots, p_{(C-1)j}\}$, since the probability of the last category is then automatically defined. Our objective is to estimate the probability distribution, and subsequently various properties, of $(h(\Y_{l})\mid \F_j)$  for $l>j$, where $h$ is some suitable measurable function and $\F_j$ is the sigma-field generated by the data up to the $j^{th}$ batch.

If we focus on the aforementioned voting context of Bihar (or, the USA), $C$ stands for the number of candidates contesting in a constituency (or, a state), and each voter casts a vote in favour of exactly one of these candidates. Here, $n_j$ represents the number of voters whose votes are counted in the $j^{th}$ round (or, the $j^{th}$ county), while $N_j$ denotes the corresponding cumulative number of votes. Similarly, $\X_{j}$ shows the number of votes received by the candidates in the $j^{th}$ round (or, the $j^{th}$ county), and $\Y_{j}$ represents the corresponding vector of cumulative votes. We can use our method to find the probability of winning for the $i^{th}$ candidate, given the counted votes from the first $j$ number of rounds (or, counties). In this case, the measurable function of interest is $h_{1,i}(\Y_K)=\mathbb{I}(Y_{iK} > \max_{l\ne i} Y_{lK})$, where $\mathbb{I}(\cdot)$ denotes the indicator function. Similarly, using the function $h_2(\Y_K)=\Y_K^{(1)} - \Y_K^{(2)}$, where $\Y_K^{(1)}$ and $\Y_K^{(2)}$ respectively denote the first and second order statistics of $\Y_K$, we can predict the margin of victory as well.  


\subsection{Model framework and main results}
\label{subsec:setup-main}

We propose to consider the multinomial cell probabilities as random variables, having a fixed part and a randomly varying part. Thus, for $c=1,2,\ldots,C$, $p_{cj}$ is written as $p_c+\varepsilon_{cj}$, where $p_c$ is the constant preference component that remains the same across the batches for category $c$, and $\varepsilon_{cj}$'s, for $c \in \{1,2,\ldots, C \}$, are zero-mean random perturbations  that model possible fluctuations in the preference probabilities for the categories across the batches. We enforce $\varepsilon_{Cj}=-\sum_{c=1}^{C-1}\varepsilon_{cj}$. In line with the earlier notations, we use $\boldp = (p_1,p_2,\ldots,p_C)^T$, $\tilde{\boldp}=(p_1,p_2,\ldots,p_{C-1})^T$, $\boldp_j=(p_{1j},p_{2j},\ldots, p_{Cj})^T$ and $\boldeps_j=(\varepsilon_{1j},\varepsilon_{2j},\ldots, \varepsilon_{(C-1)j})^T$, for convenience. Note that the covariances between the components of $\boldeps_j$'s are likely to be negative due to the structure of the multinomial distribution. The randomness of $\boldeps_{j}$'s need to be also carefully modelled so as to ensure that the category probabilities $p_{ij}$'s lie in the interval $[0,1]$. We shall use $\mathcal{S}$ to denote the possible set of values for $\boldeps_{j}$. 

Let $\cov(\cdot,\cdot)$ denote the variance-covariance matrix of two random variables and $\N_r(\bm{\theta},\Psi)$ denote a $r$-variate normal distribution with mean $\bm{\theta}$ and dispersion matrix $\Psi$. Throughout this article, $\bm{0}$ denotes a vector of all zeroes and $\I$ denotes an identity matrix of appropriate order. Following is a critical assumption that we use throughout the paper.

\begin{assumption}
\label{asmp:exp_var_epsilon}
The random variables $(\boldeps_j)_{1\leqslant j\leqslant K}$, as defined above, are independent of each other and are distributed on $\mathcal{S}$, with $\E(\varepsilon_{cj})=0$ for $1\leqslant c \leqslant C-1$. Further, the density of $\sqrt{n_j} \boldeps_j$ converges uniformly to the density of $\N_{C-1}(\bm 0,\Xi_{\boldeps})$, for some positive definite matrix $\Xi_{\boldeps}$, typically unknown.
\end{assumption}

Then, the proposed model can be written as
\begin{equation}
\label{eq:proposed_model}
\X_{j}\given \boldeps_j \sim \multi\left(n_j,p_1+\varepsilon_{1j},p_2+\varepsilon_{2j},\ldots, p_{C-1}+\varepsilon_{(C-1)j},p_C-\sum_{c=1}^{C-1}\varepsilon_{cj}\right),
\end{equation}
where $(\X_j)_{1\leqslant j\leqslant K}$ are independent of each other and $(\boldeps_j)_{1\leqslant j\leqslant K}$ satisfy \Cref{asmp:exp_var_epsilon}.

We shall adopt a Bayesian framework to implement the above model. Before that, it is imperative to present a couple of frequentist results on the distribution of the votes, as they help us in setting up the problem, and serve as the basis of the framework and the results stated in \Cref{subsec:gibbs}. Below, \Cref{thm:unconditional-x-y} shows the prior distribution of the voting percentages, while \Cref{lem:sin-inverse-clt} helps us in getting the distribution of the transformed prior that we actually use in our Bayesian hierarchical models. 


\begin{proposition}
\label{thm:unconditional-x-y}
For the proposed model (\cref{eq:proposed_model}) the following results are true if \Cref{asmp:exp_var_epsilon} is satisfied. 
\begin{itemize}
	
	\item[(a)] Unconditional first and second order moments of $X_{cj}$, $X_{c'j}$ ($c,c'=1,2,\hdots,C$, $c\neq c'$) are given by
	\begin{equation}
		\label{eq:unconditional-moments-x}
		\E(X_{cj})=n_jp_c, \; \cov(X_{cj},X_{c'j}) = n_j \begin{bmatrix}
			p_c (1-p_c) & -p_c p_{c'} \\
			-p_c p_{c'} & p_{c'}(1-p_{c'})
		\end{bmatrix} + n_j (n_j -1) \cov(\varepsilon_{cj},\varepsilon_{c'j}).
	\end{equation}
	
	\item[(b)] Unconditional first and second order moments of $Y_{cj}$, $Y_{c'j}$ ($c,c'=1,2,\hdots,C$, $c\neq c'$) are given by
	\begin{equation}
		\label{eq:unconditional-moments-y}
		\E(Y_{cj})=N_jp_c, \; \cov(Y_{cj},Y_{c'j}) = N_j \begin{bmatrix}
			p_c (1-p_c) & -p_c p_{c'} \\
			-p_c p_{c'} & p_{c'}(1-p_{c'})
		\end{bmatrix} + \sum_{i=1}^j n_i (n_i -1) \cov(\varepsilon_{ci},\varepsilon_{c'i}).
	\end{equation}
	
	\item[(c)] As $n_j\to\infty$, $\hat \boldp_j = (\hat p_{1j},\hat p_{2j},\hdots,\hat p_{(C-1)j})^T = (X_{1j}/n_j,X_{2j}/n_j,\hdots,X_{(C-1)j}/n_j)^T$ satisfies the following:
	\begin{equation} 
		\label{eq:asymptotic-xnj}
		\sqrt{n_j}\left(\hat \boldp_j - \tilde \boldp \right) \overset{\mathcal{L}}{\rightarrow} \N_{C-1} \left(\bm{0}, \Xi \right),
	\end{equation}
	where $\overset{\mathcal{L}}{\rightarrow}$ denotes convergence in law, and $\Xi=\Xi_p+\Xi_{\boldeps}$. Here, $\Xi_{\boldeps}$ is positive definite with finite norm and 
	\begin{equation}\label{eq:Xi-p}
		\Xi_p = { \begin{bmatrix} p_1 (1-p_1) & -p_1 p_2 & \hdots & -p_1p_{C-1}  \\ -p_2 p_1 & p_2(1-p_2) & \hdots & -p_2p_{C-1} \\ \vdots & \vdots & \ddots & \vdots \\ -p_{C-1}p_1 & -p_{C-1}p_2 & \hdots & p_{C-1}(1-p_{C-1}) \end{bmatrix}}.
	\end{equation}
\end{itemize}
\end{proposition}

\begin{remark}\label{rem:asymptotic-xnj}
In part (c) of \Cref{thm:unconditional-x-y}, if the variances and covariance of all components of $\boldeps_j$ are $o(1/n_j)$ then as $n_j\to\infty$, $\sqrt{n_j}\left(\hat \boldp_j - \tilde \boldp \right) \overset{\mathcal{L}}{\rightarrow} \N_{C-1} \left(\bm{0}, \Xi_p \right)$.
\end{remark}

The above remark includes the scenario of independent and identically distributed multinomials and is therefore of special interest. Detailed proof of \Cref{thm:unconditional-x-y} is deferred to \Cref{sec:proofs}. Note that the asymptotic variance is a function of $\boldp$, thereby motivating us to adapt an appropriate adjustment in the form of using a suitable variance stabilizing transformation. While the standard variance stabilizing transformation is given by the sine-inverse function (\cite{anscombe1948transformation}), there are some better variance stabilizing transformations, as discussed in \cite{yu2009variance}. Based on that work, we define the following modified transformation, having superior relative errors, skewness and kurtosis:

\begin{equation}
\label{eq:sin-inverse-function}
\L_{j} = 
\begin{pmatrix}
	L_{1j} \\
	L_{2j} \\
	\vdots \\
	L_{(C-1)j}
\end{pmatrix}
=
\begin{pmatrix}
	\sin^{-1} {\dfrac{2\hat p_{1j}-1}{1+2a/n_j}} \\
	\sin^{-1} {\dfrac{2\hat p_{2j}-1}{1+2a/n_j}} \\
	\vdots \\
	\sin^{-1} {\dfrac{2\hat p_{(C-1)j}-1}{1+2a/n_j}}
\end{pmatrix},
\end{equation}
where $a$ is a positive constant. Throughout this paper, we consider $a=3/8$, one of the most popular choices in this regard, cf.\ \cite{anscombe1948transformation} and \cite{yu2009variance}. 

\begin{lemma}
\label{lem:sin-inverse-clt}
For every $1\leqslant j\leqslant K$ in the framework of the proposed model (\cref{eq:proposed_model}), under \Cref{asmp:exp_var_epsilon}
\begin{equation}
	\label{eq:asymptotic-Lnj}
	\sqrt{n_j+0.5}\left(\L_j-\boldmu\right) \overset{\mathcal{L}}{\longrightarrow} \N_{C-1} \left( \bm 0, \Sigma  \right),  ~\text{ as } n_j \to \infty,
\end{equation}
where $\boldmu = (\sin^{-1}(2p_1-1),\sin^{-1}(2p_2-1),\ldots,\sin^{-1}(2p_{C-1}-1))$, and $\Sigma$ is a variance-covariance matrix with $\norm{\Sigma}<\infty$.

In particular, if the variances and covariances of $\varepsilon_{1j}, \ldots, \varepsilon_{(C-1)j}$ are $o(1/n_j)$ then $\Sigma$ is a variance-covariance matrix with diagonal entries as $1$, and off-diagonal entries given by $$\Sigma_{cc'}=-\sqrt{\frac{p_c p_{c'}}{(1- p_c)(1- p_{c'})}} \mbox{\rm for } c\neq c' \in \{1,2,\hdots, C-1 \}. $$
\end{lemma}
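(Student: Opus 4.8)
The plan is to deduce \cref{eq:asymptotic-Lnj} from part (c) of \Cref{thm:unconditional-x-y} via the multivariate delta method, after first verifying that the two finite-sample adjustments built into \cref{eq:sin-inverse-function} --- the factor $1+2a/n_j$ in the denominator and the normalisation $\sqrt{n_j+0.5}$ in place of $\sqrt{n_j}$ --- do not affect the limiting law.

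First I would reduce to the plain transformation. Let $g:(0,1)^{C-1}\to\R^{C-1}$ be given by $g(\bm x)=\bigl(\sin^{-1}(2x_1-1),\ldots,\sin^{-1}(2x_{C-1}-1)\bigr)^T$, so that $\boldmu=g(\tilde\boldp)$. Since $\hat p_{cj}=X_{cj}/n_j\in[0,1]$, the argument $\tfrac{2\hat p_{cj}-1}{1+2a/n_j}$ always lies in $(-1,1)$, so $\L_j$ is well defined; moreover part (c) of \Cref{thm:unconditional-x-y} gives $\hat\boldp_j-\tilde\boldp=O_p(n_j^{-1/2})$, hence $\hat\boldp_j\to\tilde\boldp$ in probability. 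Because $2p_c-1\in(-1,1)$, with probability tending to one both $\tfrac{2\hat p_{cj}-1}{1+2a/n_j}$ and $2\hat p_{cj}-1$ lie in a compact subset of $(-1,1)$ on which $(\sin^{-1})'$ is bounded, and the two arguments differ by $O_p(n_j^{-1})$; by the mean value theorem $\L_j=g(\hat\boldp_j)+O_p(n_j^{-1})$. Consequently $\sqrt{n_j+0.5}\,(\L_j-\boldmu)=\sqrt{\tfrac{n_j+0.5}{n_j}}\,\sqrt{n_j}\bigl(g(\hat\boldp_j)-g(\tilde\boldp)\bigr)+o_p(1)$, and since $\sqrt{(n_j+0.5)/n_j}\to1$, Slutsky's theorem reduces the claim to finding the limit of $\sqrt{n_j}\bigl(g(\hat\boldp_j)-g(\tilde\boldp)\bigr)$.

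Next I would apply the delta method. The map $g$ is continuously differentiable on $(0,1)^{C-1}$ with diagonal Jacobian $D=\mathrm{diag}\bigl(g_1'(p_1),\ldots,g_{C-1}'(p_{C-1})\bigr)$, where $g_c'(x)=\tfrac{2}{\sqrt{1-(2x-1)^2}}=\tfrac{1}{\sqrt{x(1-x)}}$, so $D_{cc}=1/\sqrt{p_c(1-p_c)}$. By part (c) of \Cref{thm:unconditional-x-y}, $\sqrt{n_j}(\hat\boldp_j-\tilde\boldp)\overset{\mathcal{L}}{\rightarrow}\N_{C-1}(\bm 0,\Xi)$ with $\Xi=\Xi_p+\Xi_{\boldeps}$, so the delta method yields $\sqrt{n_j}\bigl(g(\hat\boldp_j)-g(\tilde\boldp)\bigr)\overset{\mathcal{L}}{\rightarrow}\N_{C-1}(\bm 0,\Sigma)$ with $\Sigma=D\Xi D=D\Xi_pD+D\Xi_{\boldeps}D$. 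Since each $p_c\in(0,1)$ the entries of $D$ are finite, and $\norm{\Xi_{\boldeps}}<\infty$ by part (c), so $\norm{\Sigma}<\infty$.

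Finally I would identify $\Sigma$ by an elementary computation. Reading off \cref{eq:Xi-p}, the $(c,c)$ entry of $D\Xi_pD$ equals $\tfrac{1}{\sqrt{p_c(1-p_c)}}\,p_c(1-p_c)\,\tfrac{1}{\sqrt{p_c(1-p_c)}}=1$, and for $c\neq c'$ the $(c,c')$ entry equals $\tfrac{1}{\sqrt{p_c(1-p_c)}}(-p_cp_{c'})\tfrac{1}{\sqrt{p_{c'}(1-p_{c'})}}=-\sqrt{\tfrac{p_cp_{c'}}{(1-p_c)(1-p_{c'})}}$; this establishes the general statement with $\Sigma=D\Xi_pD+D\Xi_{\boldeps}D$. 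In the special case where all variances and covariances of $\varepsilon_{1j},\ldots,\varepsilon_{(C-1)j}$ are $o(1/n_j)$, \Cref{rem:asymptotic-xnj} gives $\Xi=\Xi_p$ (i.e.\ $\Xi_{\boldeps}$ drops out), so $\Sigma=D\Xi_pD$, which is exactly the matrix described. I expect the only genuinely delicate point to be the reduction step --- confirming that both finite-sample adjustments are $o_p(1)$ after the $\sqrt{n_j+0.5}$ scaling --- but this is routine given the boundedness of $(\sin^{-1})'$ near $2p_c-1$ and the consistency of $\hat\boldp_j$; the remainder is a direct application of the delta method and a short matrix calculation.
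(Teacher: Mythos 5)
Your proof is correct and takes essentially the same route as the paper: the multivariate delta method applied to part (c) of \Cref{thm:unconditional-x-y} with the coordinatewise arcsine map, followed by the matrix computation $\Sigma = D\Xi D$ and the reduction to $D\Xi_p D$ in the special case via \Cref{rem:asymptotic-xnj}. The only difference is that you explicitly justify discarding the $n_j$-dependent adjustments (the factor $1+2a/n_j$ and the $\sqrt{n_j+0.5}$ scaling) before invoking the delta method --- a step the paper elides by plugging the $n_j$-dependent function directly into the delta theorem --- and your sign for the off-diagonal entries is the correct one, whereas the paper's displayed formula for $\Omega_{c,c'}$ with $c\neq c'$ carries a spurious minus sign that would contradict the stated form of $\Sigma_{cc'}$.
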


\begin{proof}
For a function $g : \mathbb{R}^{C-1} \rightarrow \mathbb{R}^{C-1}$ of the form: $$g(x_1, x_2, \hdots, x_{C-1}) = \begin{pmatrix}
	g_1 (x_1) \\
	g_2 (x_2) \\
	\vdots \\
	g_{C-1} (x_{C-1}) 
\end{pmatrix},$$
using  \cref{eq:asymptotic-xnj} and the multivariate delta theorem (see
\cite{cox2005delta} and \cite{ver2012invented} for the theorem and related discussions) we get
\begin{equation}
	\label{eq:multivariate-delta}
	\sqrt{n_j}\left(g(\hat p_{1j},\hat p_{2j},\hdots,\hat p_{(C-1)j}) - g(p_1,p_2,\hdots,p_{C-1})\right) \overset{\mathcal{L}}{\rightarrow} \N_{C-1}\left(\bm 0,\Omega\right),
\end{equation}
where the $(c,c')^{th}$ entry of $\Omega$ is
$$\Omega_{c,c'} = \begin{cases} (g_c'(p_c))^2 \; \Xi_{c,c} & \text{for } c=c', \\ -g_c'(p_c)g_{c'}'(p_{c'})\; \Xi_{c,c'} & \text{for } c\neq c'. \end{cases}$$

Using $g_1(t)=g_2(t)=\cdots=g_{C-1}(t)=\sin^{-1}((2t-1)/(1+2a/n_j))$ in the above, the required result follows. For the special case in the second part, the exact form of $\Xi_p$ from \cref{eq:Xi-p} can be used.
\end{proof}

The above lemma serves as a key component of our method elaborated in the following subsection. Note that $g_1(t)$ is a monotonic function on $[0,1]$ and therefore, we can easily use the above result to make inference about $\tilde \boldp$, and subsequently about $(h(\Y_{l})\mid \F_j)$ for $l>j$.

\subsection{Bayesian estimation}
\label{subsec:gibbs}

In this paper, for implementation of the model, we adopt a Bayesian framework which is advantageous from multiple perspectives. First, we find this to yield more realistic results in several real-life data context. Second, it helps in reducing the computational complexity, especially for a large dataset. Third, this approach has the flexibility to naturally extend to similar problems in presence of covariates. 

Following \cref{eq:asymptotic-Lnj} and the notations discussed above, the proposed model (see \cref{eq:proposed_model}) is equivalent to the following in an asymptotic sense.
\begin{equation}
\L_j \sim \N_{C-1}\left(\boldmu,\frac{\Sigma}{n_j+0.5}\right), \; 1\leqslant j\leqslant K.
\end{equation}

Observe that both $\boldmu$ and $\Sigma$ are unknown parameters and are fixed across different batches. In the Bayesian framework, appropriate priors need to be assigned to these parameters to ensure ``good" behaviour of the posterior distributions. To that end, we consider the following hierarchical structure of the prior distributions. 
\begin{equation}
\label{eq:hierarchical-priors}
\boldmu \sim \N_{C-1}\left(\boldalpha,\Sigma_p\right), \; \Sigma \sim \mbox{Inverse Wishart}(\Psi,\nu), \; \Sigma_p \sim \mbox{Inverse Wishart}(\Psi_p,\nu_p).
\end{equation}

The inverse Wishart prior is the most natural conjugate prior for the covariance matrix, cf.\ \cite{chen1979bayesian}, \cite{haff1980empirical}, \cite{barnard2000modeling}, \cite{champion2003empirical}. The conjugacy property facilitates amalgamation into Markov chain Monte Carlo (MCMC) methods based on Gibbs sampling. This leads to easier simulations from the posterior distribution as well as posterior predictive distribution after each round. On this note, the inverse Wishart distribution, if imposed only on $\Sigma$ with $\Sigma_p$ known, restricts the flexibility in terms of modeling prior knowledge (\cite{hsu2012bayesian}). To address this issue, we apply the above hierarchical modeling as is most commonly used in Bayesian computations. This allows greater flexibility and stability than diffuse priors (\cite{gelman2006prior}). Studies by \cite{kass2006default}, \cite{gelman2006data}, \cite{bouriga2013estimation} discuss in depth how the hierarchical inverse Wishart priors can be used effectively for modeling variance-covariance matrices. 


Before outlining the implementation procedure of the above model, we present an important result which establishes the large sample property of the posterior means of $\boldmu$ and $\Sigma$.

\begin{theorem}
\label{thm:posterior-consistency}
Let $\boldmu_0$ and $\Sigma_0$ be the true underlying mean and true covariance matrix, respectively. Let $\boldmu_{PM}$ and $\Sigma_{PM}$ be the posterior means. Then, for the hierarchical prior distributions given by \cref{eq:hierarchical-priors}, the following are true:   
\begin{equation}\label{eq:postconstmu}
	\lim_{K \rightarrow \infty} \E\left[\Pi_K\left( ||\boldmu_{PM} - \boldmu_0|| > \epsilon \mid \L_1, \L_2, \ldots,\L_K \right)\right] =0,
\end{equation}
\begin{equation}\label{eq:postconstSigma}
	\lim_{K \rightarrow \infty} \E\left[\Pi_K\left( ||\Sigma_{PM} - \Sigma_0|| > \epsilon \mid \L_1, \L_2, \ldots,\L_K \right)\right] =0.
\end{equation}
\end{theorem}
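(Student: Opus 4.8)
The plan is to establish \Cref{thm:posterior-consistency} by first invoking the asymptotic equivalence provided by \Cref{lem:sin-inverse-clt}, which tells us that for large $n_j$ the transformed data satisfy $\L_j \sim \N_{C-1}(\boldmu_0, \Sigma_0/(n_j+0.5))$ approximately, with the $\L_j$'s independent across batches. So the problem reduces to a standard Bayesian asymptotics question: we have independent (but heteroskedastic, with known scaling $1/(n_j+0.5)$) Gaussian observations with unknown common mean vector $\boldmu_0$ and unknown common covariance $\Sigma_0$, equipped with the hierarchical normal--inverse-Wishart prior \cref{eq:hierarchical-priors}. I would first reduce the nuisance hierarchical layer: since $\Sigma_p$ enters only through the prior on $\boldmu$, integrating it out yields a proper (heavier-tailed, matrix-$t$-type) but fixed prior on $\boldmu$ with full support on $\R^{C-1}$, and similarly the prior on $\Sigma$ has full support on the positive-definite cone. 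The key structural point to record is that the model is a regular parametric family with finite-dimensional parameter $(\boldmu,\Sigma)$, the prior puts positive mass on every neighborhood of the truth, and the Kullback--Leibler divergence between the parameter values behaves smoothly.

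Next I would prove posterior consistency in the Doob/Schwartz sense: show that for every $\epsilon>0$, $\Pi_K(\,\|\boldmu-\boldmu_0\|+\|\Sigma-\Sigma_0\|>\epsilon \mid \L_1,\ldots,\L_K)\to 0$ almost surely (and in $\E$, which then follows by bounded convergence since probabilities are in $[0,1]$). The clean way is to exhibit, for the testing-based approach, uniformly consistent tests separating $\{(\boldmu,\Sigma): \|(\boldmu,\Sigma)-(\boldmu_0,\Sigma_0)\|>\epsilon\}$ from the truth --- these exist because the sample mean $\bar\L = (\sum_j (n_j+0.5)\L_j)/(\sum_j (n_j+0.5))$ is a consistent estimator of $\boldmu_0$ (its variance is $\Sigma_0/\sum_j(n_j+0.5)\to 0$ provided $\sum n_j\to\infty$, which holds since each $n_j\geqslant 1$ and $K\to\infty$), and a suitably pooled sum-of-squares statistic consistently estimates $\Sigma_0$. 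Combined with the prior positivity of KL-neighborhoods (Schwartz's condition, which here is immediate from Gaussian smoothness), Schwartz's theorem gives posterior consistency at rate governed by $\sum_j n_j$. Then $\boldmu_{PM} = \E[\boldmu\mid \L_{1:K}]$ and $\Sigma_{PM}=\E[\Sigma\mid\L_{1:K}]$: to pass from posterior consistency to consistency of the posterior \emph{means}, I would add a uniform integrability argument --- show the posterior tails are controlled (e.g.\ the conditional posterior of $\boldmu$ given $\Sigma$ is Gaussian with shrinking variance, and the marginal posterior of $\Sigma$ has inverse-Wishart-type tails with enough degrees of freedom once $K$ is large), so that $\E[\boldmu\mid\L_{1:K}]\to\boldmu_0$ and $\E[\Sigma\mid\L_{1:K}]\to\Sigma_0$ in probability, which is exactly \cref{eq:postconstmu}--\cref{eq:postconstSigma}.

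There are two technical obstacles I anticipate. The first and main one is that \Cref{lem:sin-inverse-clt} only gives an \emph{asymptotic} (in $n_j$) distributional statement for each fixed batch, not an exact likelihood, so strictly speaking the posterior is computed under the Gaussian working model while the data are generated by the true multinomial mechanism; I would handle this by arguing that the total-variation discrepancy between the true law of $\L_j$ and the Gaussian approximation is $o(1)$ uniformly (this is exactly the uniform-convergence content of \Cref{asmp:exp_var_epsilon} pushed through the delta-method argument of \Cref{lem:sin-inverse-clt}), and that a summable-in-an-appropriate-sense control of these discrepancies lets the consistency conclusion survive --- alternatively, one states the theorem as consistency under the idealized Gaussian model, which is the interpretation consistent with the displayed hierarchical model. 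The second, more routine obstacle is bookkeeping with the two-level hierarchical prior on $\boldmu$: one must verify the marginal prior on $\boldmu$ after integrating out $\Sigma_p$ is proper and has a density bounded away from zero on compact sets containing $\boldmu_0$ (so Schwartz's prior-positivity holds) while still having tails light enough for the uniform integrability step; this follows from standard inverse-Wishart marginalization but needs the mild moment condition $\nu_p > C-1$ (resp.\ $\nu$ large enough) which I would state as part of the prior specification.
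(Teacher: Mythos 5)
Your plan is sound, but it takes a genuinely different route from the paper. The paper does not invoke Schwartz's theorem or a testing argument at all: it exploits the normal--inverse-Wishart conjugacy to write the posterior means in closed form, and then bounds $\norm{\boldmu_{PM}-\boldmu_0}$ directly by splitting it into a prior-shrinkage term $\left[\Sigma_p^{-1}+N\Sigma^{-1}\right]^{-1}\Sigma_p^{-1}(\boldalpha-\boldmu_0)$, which vanishes because the data precision grows like $N$, and a fluctuation term driven by $\sum_i n_i(\L_i-\boldmu_0)$, which is treated as $\sqrt{N}Z$ with $Z\sim\N(\bm 0,\Sigma_0)$; for $\Sigma_{PM}$ it starts from the explicit inverse-Wishart posterior \cref{eq:Sigma-posterior}, expands the sum of squares around $\boldmu_0$, and applies the law of large numbers to the resulting $\mathrm{Wishart}(\Sigma_0,K)$ sum. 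In other words, the paper works entirely under the idealized Gaussian model and never separates ``posterior consistency'' from ``consistency of the posterior mean'' --- the explicit formulas make that passage automatic. Your approach buys generality and rigor on exactly the two points the paper glosses over: (i) you explicitly confront the fact that \Cref{lem:sin-inverse-clt} gives only an asymptotic-in-$n_j$ Gaussian approximation, so the posterior is computed under a working likelihood (the paper silently assumes $\sqrt{n_i}(\L_i-\boldmu_0)\sim\N_{C-1}(\bm 0,\Sigma_0)$ exactly), and (ii) you supply the uniform-integrability step needed to pass from concentration of the posterior to convergence of its mean. What it costs is length and some unfinished bookkeeping: Schwartz's theorem in its textbook form is for iid data, so you would need an independent-non-identically-distributed version (or your explicit-estimator tests, which in effect reproduce the paper's direct computation), and several of your steps are stated as intentions rather than verified (the total-variation control of the Gaussian approximation, the tail conditions $\nu_p>C-1$, the properness of the marginalized prior). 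The paper's direct route is shorter and yields explicit $O(1/\sqrt{N})$ and $O(1/\sqrt{K})$ rates for free; yours is the more defensible scaffolding if one wanted a fully rigorous statement that also accounts for the approximation error in the transformed likelihood.
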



In terms of implementation, it is usually a complicated procedure to find out a closed form for the joint posterior distribution of the parameters. A common approach is to adopt Gibbs sampling which reduces the computational burden significantly. It is an MCMC method to obtain a sequence of realizations from a joint probability distribution. Here, every parameter is updated in an iterative manner using the conditional posterior distributions given other parameters.  We refer to \cite{geman1984stochastic} and \cite{durbin2002simple} for more in-depth readings on Gibbs sampling. 

Recall that $\F_j$ denotes the sigma-field generated by the data up to the $j^{th}$ instance. In addition, below, $\Gamma_k$ stands for a $k$-variate gamma function and $\mbox{tr}(\cdot)$ is the trace of a matrix. 

For $j=1$, we have $\L_{1}\given\boldmu,\Sigma \sim \N_{C-1} \left(\boldmu, \Sigma/(n_1+0.5)\right)$. The joint posterior likelihood is therefore given by 
\begin{align}
f(\boldmu , \Sigma, \Sigma_p \given \F_1) &\propto |\Sigma|^{-1/2} \exp\left[ -\frac{n_1+0.5}{2} (\L_{1} -\boldmu)^T \Sigma^{-1} (\L_{1} -\boldmu)\right]
\nonumber \\
&\times  |\Sigma_p|^{-1/2} \exp\left[ -\frac{1}{2 } (\boldmu -\boldalpha)^T \Sigma_p^{-1}  (\boldmu -\boldalpha)\right]  \nonumber \\
&\times \dfrac{|\Psi|^{\nu/2}}{2^{\nu}\Gamma_2(\nu/2)}|\Sigma|^{-(\nu+C)/2}\exp \left[-\frac{1}{2}\mbox{tr}(\Psi \Sigma^{-1})\right] \nonumber \\
\label{eq:joint_posterior_round1}
&\times \dfrac{|\Psi_p|^{\nu_p/2}}{2^{\nu_p}\Gamma_2(\nu_p/2)}|\Sigma_p|^{-(\nu_p+C)/2}\exp \left[-\frac{1}{2}\mbox{tr}(\Psi_p \Sigma_p^{-1})\right].
\end{align}

In an exact similar way, the posterior likelihood based on the data up to the $j^{th}$ instance can be written as
\begin{align}
f(\boldmu, \Sigma, \Sigma_p \given \F_j) &\propto |\Sigma|^{-j/2} \exp\left[ -\sum_{i=1}^j \frac{n_i+0.5}{2} (\L_i -\boldmu)^T \Sigma^{-1} (\L_i -\boldmu)\right]
\nonumber \\
&\times  |\Sigma_p|^{-1/2} \exp\left[ -\frac{1}{2 } (\boldmu -\boldalpha)^T \Sigma_p^{-1}  (\boldmu -\boldalpha)\right]  \nonumber \\
&\times \dfrac{|\Psi|^{\nu/2}}{2^{\nu}\Gamma_2(\nu/2)}|\Sigma|^{-(\nu+C)/2}\exp \left[-\frac{1}{2}\mbox{tr}(\Psi \Sigma^{-1})\right] \nonumber \\
\label{eq:joint_posterior_general}
&\times \dfrac{|\Psi_p|^{\nu_p/2}}{2^{\nu_p}\Gamma_2(\nu_p/2)}|\Sigma_p|^{-(\nu_p+C)/2}\exp \left[-\frac{1}{2}\mbox{tr}(\Psi_p \Sigma_p^{-1})\right].
\end{align}

In the Gibbs sampler, we need to use the conditional posterior distributions of $\boldmu$, $\Sigma$ and $\Sigma_p$. It is easy to note that
\begin{equation}
\label{eq:Sigma-posterior-density}
f(\Sigma \given \boldmu, \Sigma_p, \F_j) \propto \abs{\Sigma}^{-(\nu+j+C)/2}\exp \left[-\frac{1}{2}\mbox{tr}\left(\Psi \Sigma^{-1} + \sum_{i=1}^j (n_i+0.5) (\L_i -\boldmu)(\L_i -\boldmu)^T \Sigma^{-1} \right)\right].
\end{equation}

Thus, we can write the following.
\begin{equation}
\label{eq:Sigma-posterior}
\Sigma \given \boldmu, \Sigma_p, \F_j \sim \mbox{Inverse Wishart}\left( \Psi + \sum_{i=1}^j (n_i+0.5) (\L_i - \boldmu)(\L_i - \boldmu)^T, \nu+j \right).
\end{equation}

In an identical fashion, it is possible to show that
\begin{equation}
\label{eq:Sigmap-posterior}
\Sigma_p \given \boldmu, \Sigma, \F_j \sim \mbox{Inverse Wishart}\left(\Psi_p+  ({\boldmu} -{\boldalpha}) ({\boldmu} -{\boldalpha})^T, \nu_p+1 \right).
\end{equation}

For the conditional posterior distribution of $\boldmu$, observe that
\begin{align}
& f(\boldmu \given \Sigma, \Sigma_p, \F_j) \nonumber \\ 
&\propto \exp \left[-\frac{1}{2}\left(\sum_{i=1}^j (n_i+0.5) (\L_i -\boldmu)^T \Sigma^{-1}(\L_i -\boldmu) + (\boldmu -\boldalpha)^T \Sigma_p^{-1}  (\boldmu -\boldalpha)\right)\right] \nonumber \\
\label{eq:mu-posterior-density}
&\propto \exp \left[-\frac{1}{2}\left(\boldmu^T\left[\Sigma_p^{-1} + (N_j+j/2)\Sigma^{-1} \right]\boldmu - \boldmu^T \left[\Sigma_p^{-1} {\boldalpha} + \Sigma^{-1}\left(\sum_{i=1}^j (n_i+0.5) \L_i \right) \right]\right)\right].
\end{align}

Let $V_j=\Sigma_p^{-1} + (N_j+j/2)\Sigma^{-1}$. Comparing the above expression with the density of multivariate normal distribution, and readjusting the terms as necessary, we can show that
\begin{equation}
\label{eq:mu-posterior}
\boldmu \given \Sigma, \Sigma_p,\F_j \sim \N_{C-1} \left( V_j^{-1} \left[\Sigma_p^{-1} {\boldalpha} + \Sigma^{-1}\left(\sum_{i=1}^j (n_i+0.5) \L_i \right) \right], V_j^{-1} \right).
\end{equation}

Existing theory on Bayesian inference ensures that if the above conditional posterior distributions are iterated many times, it would converge to the true posterior distributions for the parameters. It is naturally of prime importance to make sure that convergence is achieved when we implement the algorithm and collect a posterior sample. In that regard, we use the Gelman-Rubin statistic, cf. \cite{gelman1992inference}. This is an efficient way to monitor the convergence of the Markov chains. Here, multiple parallel chains are initiated from different starting values. Following the standard theory of convergence, all of these chains eventually converge to the true posterior distributions and hence, after enough iterations, it should be impossible to distinguish between different chains. Leveraging this idea and applying an ANOVA-like technique, the Gelman-Rubin statistic compares the variation between the chains to the variation within the chains. Ideally, this statistic converges to 1, and a value close to 1 indicates convergence. In all our applications, we start multiple chains by randomly generating initial values for the parameters and start collecting samples from the posterior distributions only when the value of the Gelman-Rubin statistic is below 1.1. Also, we take samples from iterations sufficiently apart from each other so as to ensure independence of the realizations. 

\subsection{Prediction}\label{subsec:prediction}

In this section, we discuss the prediction procedure for the proposed model. Note that, because of the conjugacy we observed above, the posterior predictive distribution for the $l^{th}$ instance based on the data up to the $j^{th}$ instance ($l>j$) is given by a normal distribution. The mean parameter of this distribution can be computed by the following equation: 
\begin{equation}
\label{eq:predict-posterior-mean}
\E(\L_{l} \given \F_j) = \E\left(\E\left(\L_{l} \given \boldmu,\F_j\right)\right) = \E(\boldmu\given\F_j).
\end{equation}

Similarly, using the property of conditional variance, the dispersion parameter of the posterior predictive distribution is
\begin{equation}
\label{eq:predict-posterior-var}
\cov(\L_{l} \given \F_j) = \E\left[\cov(\L_{l} \given \boldmu,\Sigma,\F_j)\right] + \cov\left(\E(\L_{l} \given \boldmu,\Sigma,\F_j)\right) = \frac{\E\left(\Sigma \given \F_j\right)}{n_l+0.5} + \cov(\boldmu\given \F_j).
\end{equation}

Once again, it is complicated to get closed form expressions for the above expectation and variance. Therefore, we make use of the Gibbs sampler to simulate realizations from the posterior predictive distribution. Based on the data up to the $j^{th}$ instance, we generate $M$ samples (for large $M$) for the parameters from the posterior distributions. Let us call them $\Sigma_{j,s}, \boldmu_{j, s}$ for $s=1,2, \ldots, M$. Then, we can approximate the mean of the posterior predictive distribution in \cref{eq:predict-posterior-mean} by $\sum_{s=1}^M\boldmu_{j, s}/M$. Similarly, the first term in the right hand side of \cref{eq:predict-posterior-var} can be approximated by $\hat \Sigma_j/(n_l+0.5)$ where $\hat\Sigma_j$ is the sample mean of $\Sigma_{j,s}$ for $s=1,2, \ldots, M$, while the second term in that equation can be approximated by the sample dispersion matrix of $\boldmu_{j, s}$ for $s=1,2,\ldots,M$. Next, we generate many samples (once again, say $M$) from the posterior predictive distribution with these estimated parameters.

This sample from the posterior predictive distribution is then used to infer on various properties of $(h(\Y_l)\mid \F_j)$, as discussed in \Cref{subsec:preliminaries}. 

\section{Simulation Study}
\label{sec:simulation} 

In this section, we consider a few toy election scenarios and evaluate the effectiveness of our proposed methodology across various scenarios of elections. We generate $K$ batches of election data as multinomial random draws following \cref{eq:proposed_model} for three different simulated election outcomes (SEO). Each election data is assumed to have $C$ candidates. The expected values of $n_j$ (the number of votes cast in the $j^{th}$ batch ) are considered to be the same for all $j$. Let $n$ denote the common expected value for different batches. We consider different value of  $C$ (3 or 5), $K$ (25 or 50), $n$ (100 or 1000 or 5000 or 50000) to understand the robustness of the method. 

The SEOs we consider differ in the way the $\boldeps_j$'s are generated. First, we take the simplest situation of $\varepsilon_{cj}=0$ for all $c,j$, which corresponds to an iid collection of multinomial random variables. Second, $\boldeps_{j}$ is simulated from a truncated multivariate normal distribution with mean $\bm{0}$ and dispersion matrix $n_j^{-0.5}\I$. Third, we impose dependence across different coordinates of $\boldeps_{j}$, and simulate it from a truncated multivariate normal distribution with mean $\bm{0}$ and dispersion matrix $n_j^{-0.5}A$, where $A$ is a randomly generated positive definite matrix with finite norm. Observe that all  the three SEOs satisfy \Cref{asmp:exp_var_epsilon}. For every combination of $K,C,n$, each experiment is repeated many times and we compute the average performance over these repetitions. 

In this simulation study, we particularly explore two aspects in detail. First, we find out the accuracy of our methodology in terms of predicting the candidate with maximum votes at the end. Here, a decision is made when the winner is predicted with at least 99.5\% probability and when the predicted difference in votes for the top two candidates is more than 5\% of the remaining votes to be counted. Second, we focus on two specific cases which are similar to our real-life applications, and examine how well we can predict various properties of $h(\Y_l)$, for different choices of $h(\cdot)$. One of these cases has $C=5, K=50$ and we look for different options for $n$. In the other case we have $C=3, K=25, n=5000$. 

In case of the first problem, for every iteration, the fixed values of $\boldp$ are obtained from a Dirichlet distribution with equal parameters to ensure a more general view of the performance of the proposed method. Detailed results are reported in \Cref{sec:additional-results}, in Tables \ref{tab:simulation-accuracy-margin-dgp1-detail}, \ref{tab:simulation-accuracy-margin-dgp2-detail}, and \ref{tab:simulation-accuracy-margin-dgp3-detail}. We note that the results are not too sensitive with respect to the choices of $C$ and $K$, but the accuracy increases with $n$. Below, in \Cref{tab:simulation-accuracy-margin}, we look at the results for $C=5$, $K=50$ for the three different SEOs. It is evident that under the iid assumption (SEO1), the method makes a correct call almost all the times. When the components of $\boldeps_j$ are independent random variables (SEO2), a correct call is made more than 90\% of the times for large enough samples ($n>5000$). For the third SEO, where the components of $\boldeps_j$ are simulated from a multivariate distribution with a general positive definite matrix as the covariance matrix, the accuracy of making a correct call drops to 70\% for $n=50000$. Another interesting observation is that the accuracy depends heavily on the average final margin between the top two categories. In other words, if the underlying probabilities of the top two candidates are close (which directly relates to a lower margin of difference between the final counts), then the method achieves lower accuracy, and vice-versa. We also observe that in those cases, the method records {\it no call} (i.e.\ a decision cannot be made with the prescribed rules) more often than incorrect calls.
\begin{table}
\caption{\label{tab:simulation-accuracy-margin} Accuracy (in \%) of predicting the category with maximum votes (corresponding final margins, averaged over all repetitions, are given in parentheses) for different SEOs. All results correspond to the case of $C=5$, $K=50$.}
\centering
\begin{tabular}{|lllll|}
	\hline
	SEO & $n$   & Correct (avg margin) & Incorrect (avg margin) & No call (avg margin) \\
	\hline
	SEO1 & 100 & 94.5\% (1584) & 1.5\% (73) & 4\% (168) \\ 
	& 1000 & 98.25\% (14388) & 0.5\% (516) & 1.25\% (378) \\ 
	& 5000 & 99.5\% (71562) & 0.25\% (1461) & 0.25\% (858) \\ 
	& 50000 & 100\% (730428) &  &  \\
	\hline
	SEO2 & 100 & 70\% (605) & 8.25\% (205) & 21.75\% (314) \\ 
	& 1000 & 84.75\% (10986) & 4.75\% (1286) & 10.5\% (2891) \\ 
	& 5000 & 91.75\% (67476) & 3.25\% (6624) & 5\% (12241) \\ 
	& 50000 & 95.5\% (720905) & 2.25\% (71467) & 2.25\% (77728) \\
	\hline
	SEO3 & 100 & 42.4\% (154) & 22\% (106) & 35.6\% (110) \\ 
	& 1000 & 45\% (2439) & 23.2\% (1423) & 31.8\% (1536) \\ 
	& 5000 & 55.8\% (18972) & 14.2\% (9671) & 30\% (10822) \\
	& 50000 & 70\% (305700) & 8.8\% (79016) & 21.2\% (184868) \\ 
	\hline
\end{tabular}
\end{table}

Moving on to the particular choice of $C=3,K=25,n=5000$, we aim to find out the effectiveness of the proposed approach in identifying the leading candidates for varying degrees of difference between the true vote probabilities of the topmost two candidates. To that end, for $\boldp=(p_1,p_2,p_3)$, without loss of generality, we assume $p_1>p_2>p_3$. Now, data are generated for the above three simulated elections by fixing $p_1-p_2=\delta$ where $\delta\in \{0.01,0.05,0.1,0.25\}$. For every choice of $\delta$ and for every SEO, we repeat the experiments and find out the mean accuracy in predicting the top candidate. Additionally, we also find out the average number of observations used by the method for making the prediction. These results are displayed in \Cref{tab:simulation-accuracy-vote}.

\begin{table}
\caption{\label{tab:simulation-accuracy-vote} Accuracy (in \%) of predicting the candidate with maximum vote count for different SEOs and for different values of $\delta =p_1-p_2$. All results correspond to the case of $n=5000$, $C=3$, $K=25$. Numbers inside the parentheses indicate what percentage of data are used on an average before making a call.}
\centering
\begin{tabular}{|lllll|}
	\hline
	SEO & $\delta$   & Correct (data used) & Incorrect (data used) & No call \\
	\hline
	DGP1 & 0.01 & 96\% (14\%) & 4\% (13\%) &   \\ 
	& 0.05 & 100\% (12\%) &  &  \\ 
	& 0.10 & 100\% (12\%) &  &  \\  
	& 0.25 & 100\% (12\%) &  &  \\
	\hline
	SEO2 & 0.01 & 64.5\% (22\%) & 30.5\% (14\%) & 5\%  \\ 
	& 0.05 & 76\% (22\%) & 22.5\% (14\%) & 1.5\%  \\ 
	& 0.10 & 89.5\% (16\%) & 10\% (14\%) & 0.5\%  \\ 
	& 0.25 & 100\% (12\%) &  &  \\
	\hline
	SEO3 & 0.01 & 48\% (26\%) & 34\% (14\%) & 18\%  \\ 
	& 0.05 & 63.5\% (24\%) & 21\% (13\%) & 15.5\%  \\ 
	& 0.10 & 68\% (22\%) & 16.5\% (15\%) & 15.5\%  \\ 
	& 0.25 & 74.5\% (16\%) & 8.5\% (13\%) & 17\%  \\ 
	\hline
\end{tabular}
\end{table}

It is evident that for SEO1, the method provides great prediction. Even when there is only 1\% difference in the probabilities of the top two candidates, the method predicts the leading candidate correctly 96\% of the times. For larger than 1\% difference, it never fails to predict the leading candidate correctly. For SEO2, our method can predict the topmost candidate with more than 75\% accuracy whenever the probabilities of the top two candidates differ by at least 5\%. This accuracy reaches the value of nearly 90\% for $\delta = 0.1$ and is 100\% if the difference in those two cell probabilities is 0.25. For the third SEO, the prediction accuracy is about 70\% for $\delta=0.1$. The accuracy improves steadily as $\delta$ increases. One can say that the method will be able to predict the winner with high level of accuracy whenever there is a considerable difference between the $p_i$ values for the top two candidates. If that difference is minute, which corresponds to a very closely contested election, the accuracy will drop. Additionally, we observe that in the most extreme cases, about 20\% of the times, the method never declares a winner with desired certainty, which is in line with what we should expect.

We now focus on the second specific case of $C=5,K=50,n=50000$. For different SEOs, the values of $\boldp$ are generated randomly from a Dirichlet distribution, and we evaluate how well our method can predict the cumulative proportions of counts for the five categories. Experiments are repeated many times and we compute the overall root mean squared error (RMSE) for every SEO, based on the predictions made at different stages. Refer to \Cref{tab:simulation-rmse-sales} for these results. Once again, for SEO1, the method achieves great accuracy very early. With only 15 rounds of data (approximately 30\% of the total observations), the predictions are precise. At a similar situation, for SEO2, the RMSE is less than 1\%, and it decreases steadily to fall below 0.5\% by round 35. Finally, for the third SEO, RMSE of less than 1\% is recorded after 35 rounds as well. These results show that our method can accurately predict the overall cumulative proportions of counts with about 35 rounds of data. Translating this to the sales forecasting problems, we hypothesize that the proposed method will be able to predict the annual market shares of different categories correctly at least 3 to 4 months before the year-end.

\begin{table}
\caption{\label{tab:simulation-rmse-sales} Root mean squared error (RMSE, in \%) in estimating the overall cumulative proportions of counts for all categories.}
\centering
\begin{tabular}{|lccc|}
	\hline
	Data used & SEO1 & SEO2 & SEO3  \\
	\hline
	5 rounds  & 0.06\%    & 1.95\% & 4.38\% \\ 
	15 rounds & $<0.01\%$ & 0.95\% & 2.21\% \\ 
	25 rounds & $<0.01\%$ & 0.63\% & 1.46\% \\ 
	35 rounds & $<0.01\%$ & 0.41\% & 0.92\% \\ 
	45 rounds & $<0.01\%$ & 0.22\% & 0.51\% \\
	\hline
\end{tabular}
\end{table}

\section{Application on the data from Bihar election}
\label{sec:Bihar_election}

We examine the effectiveness of our proposed methodology in the election calling context, with the real data from  the Bihar Legislative Assembly Election held in 2020, across 243 constituencies (seats). Indian elections have multi-party system, with a few coalition of parties formed before (sometimes after) the election, playing the major role in most contests. While these coalitions are at times temporary in nature and there is no legal binding or legitimacy, typically the alliance winning  the majority (at least 50\%) of the constituencies comes to power. In the Bihar 2020 election, the contests in most constituencies were largely limited to two such alliances --  the National Democratic Alliance (NDA), and the {\it Mahagathbandhan} (MGB). Although there were a couple of other alliances, they did not impact the election outcomes and for the purpose of this analysis, we consider them in ``others'' category. For each constituency, the counting of votes took place in several rounds, and we want to examine how these round-wise numbers can be useful in predicting the final winner ahead of time.

The result of the election is downloaded from the official website of \cite{bihardata}. A brief summary of the data is provided in \Cref{tab:bihar-summary}. All summary statistics in this table are calculated over the 243 constituencies in Bihar (our analysis ignores the postal votes, which are few in numbers and do not affect the results). Note that the number of votes cast in the constituencies vary between 119159 and 225767, with the average being 172480. The votes are counted in 25 to 51 rounds across the state, and the mode is found out to be 32. While in most rounds about little over 5000 votes are counted, in some cases, especially the last few rounds, show fairly less numbers. The last three rows in the table provide the minimum, maximum and average number of votes counted per round for all the constituencies. We also point out that the final margin between winner and runner-up parties in these constituencies show a wide range. The lowest is recorded in Hilsa (13 votes) whereas the maximum is recorded in Balrampur (53078 votes).

\begin{table}[!hbt]
\caption{\label{tab:bihar-summary} Summary of the Bihar election data. All summary statistics are calculated over the 243 constituencies.}
\centering
\begin{tabular}{|lcccc|}
	\hline
	& Minimum & Maximum & Average & Median \\
	\hline
	Total votes cast & 119159 & 225767 & 172480 & 172322 \\
	Number of rounds for counting & 25 & 51 & 32.47 & 32 \\
	Final margin & 13 & 53078 & 16787 & 13913 \\
	Minimum votes counted per round & 114 & 5173 & 1618 & 1029 \\
	Maximum votes counted per round & 4426 & 8756 & 6745 & 6728 \\
	Average votes counted per round & 3070 & 6505 & 5345 & 5381 \\
	\hline
\end{tabular}
\end{table}

As mentioned above, in the following analysis, we focus on the two dominant categories (essentially the two main alliances in every constituency) and the rest of the alliances or the parties are collated into the third category. As our main objective is to predict the winner and the margin of victory, it is sensible to consider this three-category-setup. We also make the logical assumption that each vote is cast independently, thereby ensuring that in each round of counting, we have an independent multinomial distribution. However, there is no information on how the counting happens sequentially in each round, and therefore we assume the individual probabilities of the three different categories in different counting rounds to be random variables satisfying \Cref{asmp:exp_var_epsilon}. Consequently, it is an exciting application of the proposed modeling framework.

Recall the hierarchical prior distributions for the parameters in the model (\cref{eq:hierarchical-priors}). For $\boldalpha$, we use the past election's data and use the proportion of votes received by the corresponding alliances (scaled, if needed, to have $\sum \alpha_i=1$). For $\Psi$ and $\Psi_p$, we use identity matrices of appropriate order. Both $\nu$ and $\nu_p$ are taken to be 5. We point out that the results are not too sensitive to these choices. Next, using these priors, we implement our method and estimate the probability of winning for every alliance after the counting of votes in each round. Corresponding margin of victory is also estimated in the process. Based on these values, we propose the following rules to call a race. First, we allow at least 50\% votes to be counted in order to make a prediction. It is otherwise considered to be {\it too early to call}. On the other hand, akin to the procedure laid out by \cite{NBCnews}, our decision is to call the race in a particular constituency when we are at least 99.5\% confident of the winner and when the predicted winning margin is at least 5\% of the remaining votes. Unless these conditions are met, it is termed as {\it too close to call}.

We start with a summary of the results obtained after fitting the model to the election data. Complete results are provided in \Cref{tab:bihar-fullresults} in \Cref{sec:additional-results}. In 227 out of the 243 constituencies (approximately 93.4\%), the proposed method calls the race correctly. Our method considers that it was always {\it too close to call} for Bakhri and Barbigha where the eventual win margin was 439 and 238 respectively. For 14 other constituencies (approximately 5.7\%), the method calls the race incorrectly, as the winning candidate was significantly behind at the time of calling in those constituencies.

We next take a detailed look at the prediction patterns for all constituencies, in different aspects, through \Cref{fig:bihar-graph1}. In both panels of the Figure we color the correct predictions by green, the incorrect predictions by red, and no calls by blue. In the left panel of the Figure we plot the remaining vote percentages at the time of calling along the color gradient. In the right panel we plot the final margins along the color gradient. It is interesting to observe that in the right panel all red points, which correspond to incorrect calls, appear with lighter shades. Thus, our method performs perfectly for all constituencies but one where the final margins of victories are above 10000. From the left panel it is also evident that the proposed method calls the races very early for all such constituencies. Only exception to this is the constituency Baisi, which is discussed in more detail below. We can also observe that seats painted in lighter shades in left panel are also in lighter shades on right panel, implying that the closely contested constituencies require more votes to be counted before the race can be called.

\begin{figure}[!hbt]
\centering
\includegraphics[width = \textwidth,keepaspectratio]{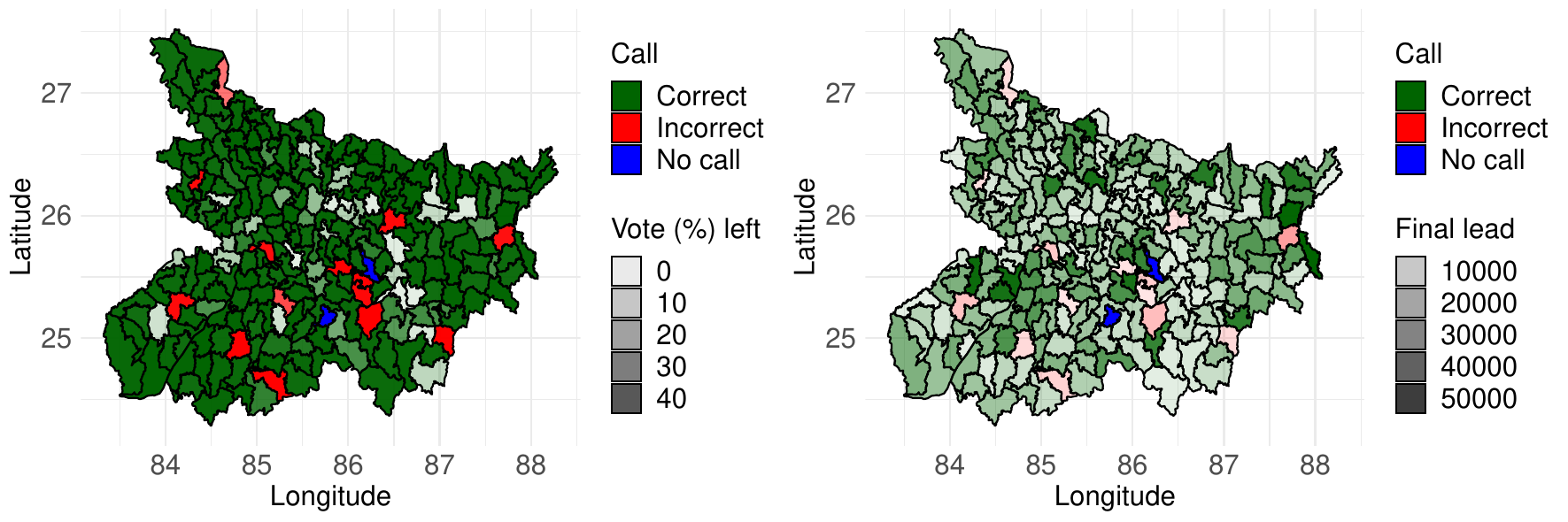}
\caption{Forecasting accuracy for different constituencies in the Bihar election data. Different colours indicate whether it is a correct call or not. Left panel shows how quickly a call is made, as a darker color indicates that the call is made with higher proportion of votes remaining to be counted. In the right panel, a lighter shade indicates smaller final margin, i.e.\ a closely contested election.}
\label{fig:bihar-graph1}
\end{figure}



To further explore the above point, the accuracy and the average percentage of votes counted before making a call, corresponding to the final margins of victories, are displayed in \Cref{tab:bihar-correct-call-summary}. There are 52 constituencies which observed very closely fought election (final margin is less than 5000 votes). In 41 out of them, our method correctly predicts the winner and in 2 of them, it cannot make a call. On average, around 70\% votes are counted for all these constituencies before we can make a call. This, interestingly, drops drastically for the other constituencies. On average, only about 60\% votes are counted before we make a call in the constituencies where eventual margin is between 5000 and 10000. In 4 out of 34 such constituencies though, our method predicts inaccurately. For the constituencies with higher eventual margins (greater than 10000),  the prediction turns out to be correct in 156 out of 157 constituencies. In these cases, around only about 50 to 55 percentage of votes are counted on average before making a call. Finally, the last row of the table corroborates the earlier observation that the size of the constituency does not have considerable effect on the prediction accuracy.

\begin{table}[h]
\caption{\label{tab:bihar-correct-call-summary} Summary of the prediction accuracy in terms of calling the race in different constituencies, according to the final margin of victory.}
\centering
\begin{tabular}{|lcccccc|}
	\hline
	Final margin & $<2000$ & 2000--5000 & 5000--10000 & 10000--20000 & $>20000$ & Total \\
	\hline
	Constituencies & 23 & 29 & 34 & 74 & 83 & 243 \\
	Correct call & 16 & 25 & 30 & 73 & 83 & 227 \\
	Incorrect call & 5 & 4 & 4 & 1 & 0 & 14 \\
	Too close to call & 2 & 0 & 0 & 0 & 0 & 2 \\
	Average counting & 68.0\% & 74.1\% & 60.5\% & 55.2\% & 51.9\% & 58.3\% \\
	Average votes & 172136 & 170946 & 171107 & 171636 & 174427 & 172480 \\
	\hline
\end{tabular}
\end{table}

Not only the winner of an election, but the proposed approach also predicts the final margin of victory, along with a prediction interval, for the winner. In \Cref{fig:bihar-prediction-interval}, we present the true margin and the predicted win margin for all the 227 constituencies where correct calls are made. Corresponding prediction intervals are also shown in the same plot. It can be observed that the predicted margins closely resemble the true margins.

\begin{figure}[!hbt]
\centering
\includegraphics[width = 0.8\textwidth,keepaspectratio]{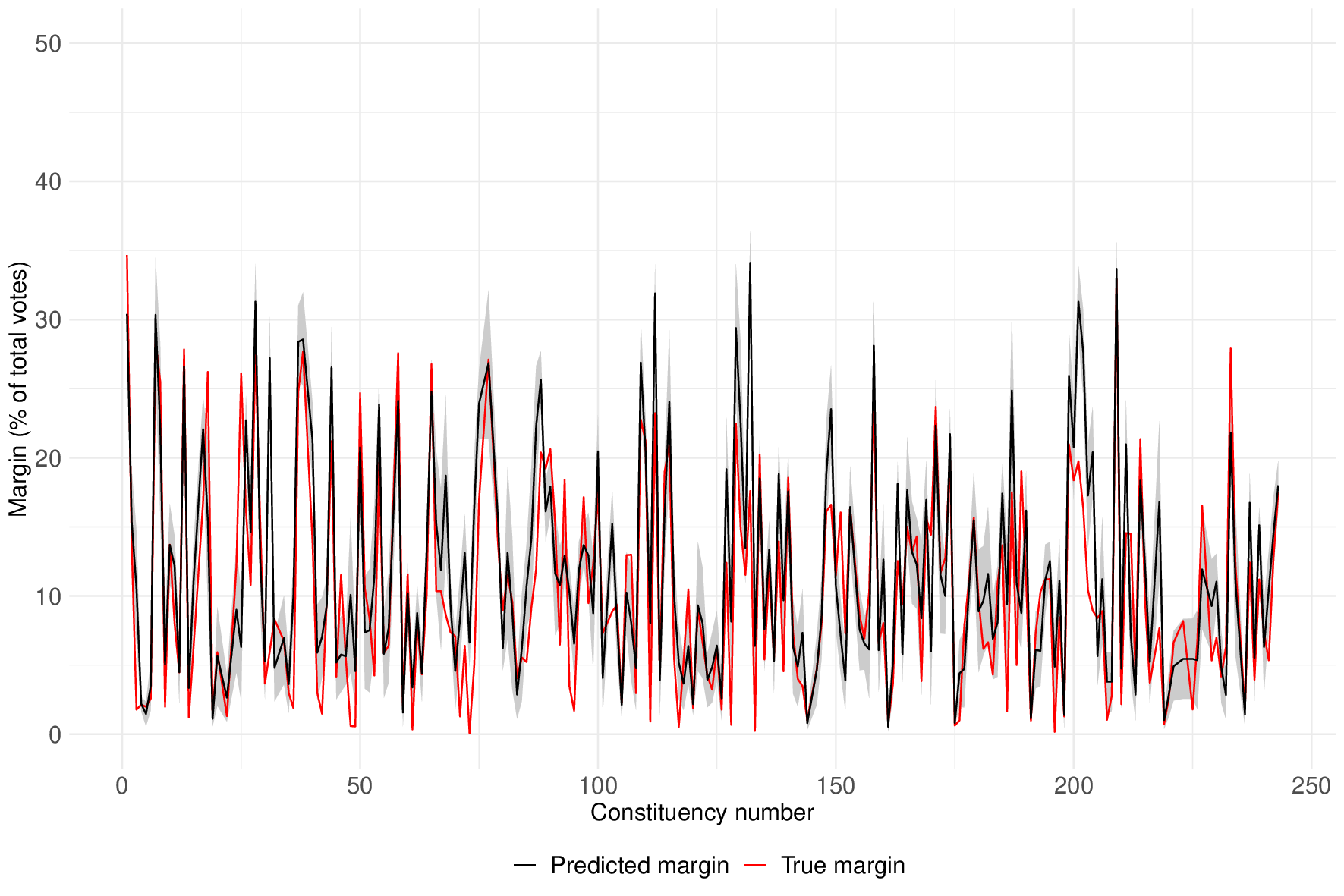}
\caption{True margin of victory and the predicted margin of victory (both in \% of total votes) for all constituencies where the method made a correct call. Prediction interval is displayed in grey.}
\label{fig:bihar-prediction-interval}
\end{figure}

Overall, it is evident that the method performs really well in terms of calling an election race based on only the counting of votes in different rounds. It is imperative to point out that better accuracy can be achieved with more detailed information about different constituencies. Especially, information about the sequence in which the counting is carried out, ethnic composition or socio-economic status of people in different areas are necessary to build a more accurate procedure. Our proposed method in this paper works with only the multinomial assumption and achieves more than 93\% accuracy in calling the race with the requirement of at least 50\% votes being counted. To that end, in order to gain more insight about the accuracy of the method, we rerun the experiment with different values for the minimum requirement of counting. The number of constituencies with correct call, incorrect call and no call are displayed against the minimum requirement in \Cref{tab:bihar-minimum-requirement}. We see that the method achieves 90\% accuracy (215 out of 243) even when only 30\% votes are counted. For minimum counting of 60\% or more votes, the results do not improve substantially, except for the fact that no call can be made in more number of constituencies. 

\begin{table}[ht]
\centering
\caption{Prediction accuracy across all 243 constituencies for the proposed method corresponding to minimum \% of votes required to be counted before calling a race.}
\label{tab:bihar-minimum-requirement}
\begin{tabular}{cccc}
	\hline
	Minimum requirement & Correct call & Incorrect call & Too close to call \\ 
	\hline
	10\% & 182 &  61 &   0 \\ 
	20\% & 199 &  44 &   0 \\ 
	30\% & 215 &  28 &   0 \\ 
	40\% & 221 &  21 &   1 \\ 
	50\% & 227 &  14 &   2 \\ 
	60\% & 234 &   7 &   2 \\ 
	70\% & 235 &   4 &   4 \\ 
	80\% & 235 &   3 &   5 \\ 
	90\% & 235 &   1 &   7 \\ 
	\hline
\end{tabular}
\end{table}


Next, we look into the robustness of the method, by focusing on four particular constituencies in more detail. These are Hilsa (minimum final margin), Balrampur (maximum final margin), Baisi (final margin more than 10000, but our proposed methodology makes a wrong call with nearly 55\% votes counted) and Barbigha (final margin of 238, the proposed method never calls the race in anyone's favour). Now, for each constituency, we create synthetic data by randomly switching the order of the individual rounds of counting and implement our method on the resulting data. This exercise helps us to understand whether the predictions would have been different if the data were available in a different order. In particular, it points towards objectively judging to what extent the success or failure of the proposed methodology in this real data can be attributed to luck. Note that the proposed modeling framework assumes independence across different rounds, and hence the method should perform similarly if the votes are actually counted in random order. In \Cref{tab:bihar-permutation} below, the results of these experiments are presented. We see that for Balrampur, even with randomly permuted rounds, the method calls the race correctly every time. For Baisi, the method yields correct forecast 99.97\% of times, although in the original order of the data it makes an incorrect call. Meanwhile, Hilsa and Barbigha observed very closely contested elections. For these two constituencies, in the experiment with permuted data, the method records {\it too close to call} decisions more commonly, 72.55\% and 70.53\% respectively. These results suggest that quite possibly the sequence of roundwise counting of  votes is  not random. 
Thus, if more information is available about individual rounds from which votes are counted, it would be possible to modify the current method to achieve greater accuracy.

\begin{table}
\caption{\label{tab:bihar-permutation} Summary of the prediction accuracy in four constituencies, for synthetic data generated through many permutations of individual rounds.}
\centering
\begin{tabular}{|lccccc|}
	\hline
	& & & \multicolumn{3}{c|}{Results in permuted data} \\
	Constituency & Final margin & Original call & Correct & Incorrect & No call \\
	\hline
	Hilsa & 13 & Incorrect & 12.57\% & 14.87\% & 72.55\% \\
	Balrampur & 53078 & Correct & 100\% & 0\% & 0\% \\
	Baisi & 16312 & Incorrect & 99.97\% & 0.03\% & 0\% \\
	Barbigha & 238 & No call & 17.19\% & 12.28\% & 70.53\% \\
	\hline
\end{tabular}
\end{table}

\section{Application to the USA election with hypothetical sequences}
\label{sec:usa_application}

In this section, in an attempt to impress upon the reader that the proposed methodology can be extended to other election forecasting contexts as well, we focus on the statewise results from the 2020 Presidential Election of the USA (data source: \cite{usdatarepo}). It is well known that in American politics, every election relies primarily on a few states, popularly termed as the `swing states', where the verdict can reasonably go to either way. For the other states, the results are guessable with absolute certainty, and are of less interest in any such research. Thus, in this application, our focus is on the eleven swing states of the 2020 Presidential Election -- Arizona, Florida, Georgia, Michigan, Minnesota, Nevada, New Hampshire, North Carolina, Pennsylvania, Texas and Wisconsin. Since the final outcome of the election in these cases are extremely likely to be either democratic or republican, with a minimal chance of a third-party-victory, we can consider a three-category-setup. Then, akin to \Cref{sec:Bihar_election}, it can be easily argue that the setup follows the proposed modeling framework. We also adopt the same setting for prior distributions and the procedure to call the race. 

Our objective is to demonstrate that partial data of the counties can provide accurate forecast of the final result in these turbulent states. However, we note that the information of the particular order in which results were declared for different counties are unavailable. That is why we call it a hypothetical application, and we circumvent the issue by taking a permutation-based approach to implement the proposed method. For every state, we randomly assign the orders to the counties and assume that the data are available in that sequence. This permutation procedure is repeated 100 times, and we evaluate the performance of the proposed method at a summary level across these permutations. 

Let us start with a brief overview of the number of counties and the number of voters in the swing states (see \Cref{tab:usa-summary}). All summary statistics are calculated based on the eleven states. We can see that the number of counties as well as the number of voters vary substantially; although the final margin of victory remains within 7.4\% of the total votes. It is in fact less than 1\% in three states (Arizona, Georgia and Wisconsin), making them the most closely contested states in this election.

\begin{table}[!hbt]
\caption{\label{tab:usa-summary} Summary of the 2020 US Presidential Election data. All summary statistics are calculated based on the 11 swing states.}
\centering
\begin{tabular}{|lcccc|}
\hline
& Minimum & Maximum & Average & Median \\
\hline
Number of counties & 10 & 254 & 84.64 & 72 \\
Total votes cast & 803,833 & 11,315,056 & 5,229,904 & 4,998,482 \\
Final margin (of total votes) & 0.25\% & 7.37\% & 2.94\% & 2.40\% \\
Minimum votes per county & 66 & 16515 & 3136 & 1817 \\
Maximum votes per county & 4426 & 2,068,144 & 914,028 & 755,969   \\
Average votes per county & 31427 & 225,686 & 85322 & 66739 \\
\hline
\end{tabular}
\end{table}

Turn attention to the results of the forecasting procedure on the permuted datasets. In the last three columns of \Cref{tab:us-call-summary}, we display the proportion of times our method correctly calls the race, finds it too close to call, or incorrectly calls the race based on partial county-level data, 

\begin{table}[ht]
\centering
\caption{Performance summary for the application on US Presidential Election data. Total number of votes is given in millions, final margin is given as percentage of total votes, and the last three columns indicate percentage of times the algorithm made a correct call, could not call or made a wrong call.}
\label{tab:us-call-summary}
\begin{tabular}{|lcccccc|}
\hline
State & Counties & Total votes & Final margin & Correct & Too close & Incorrect \\ 
\hline
Arizona &  15 &  3,385,294 &  10,457 & 60\% & 39\% & 1\% \\ 
Florida &  67 & 11,067,456 & 371,686 & 99\% & 0\% & 1\% \\ 
Georgia & 159 &  4,998,482 &  12,670 & 59\% & 1\% & 40\% \\ 
Michigan &  83 &  5,539,302 & 154,188 & 72\% & 0\% & 28\% \\ 
Minnesota &  87 &  3,277,171 & 233,012 & 82\% & 0\% & 18\% \\ 
Nevada &  17 &  1,404,911 &  33,706 & 97\% & 3\% & 0\% \\ 
New Hampshire &  10 &    803,833 &  59,277 & 100\% & 0\% & 0\% \\ 
North Carolina & 100 &  5,524,802 &  74,481 & 74\% & 1\% & 25\% \\ 
Pennsylvania &  67 &  6,915,283 &  80,555 & 54\% & 4\% & 42\% \\ 
Texas & 254 & 11,315,056 & 631,221 & 96\% & 0\% & 4\% \\ 
Wisconsin &  72 &  3,297,352 &  20,608 & 54\% & 1\% & 45\% \\
\hline
\end{tabular}
\end{table}

We observe that in New Hampshire, Florida, Nevada and Texas, partial data provide the correct prediction in almost all cases. One may connect this to the fact that the final margin in these states were on the higher side, especially if we look at the values in terms of proportion to the total number of votes. Arizona, on the other hand, was the most closely contested state with a big voter pool. Our approach is still able to predict the correct winner 60\% of the times whereas it decides the race to be too close to call on 39\% occasions. Georgia and Wisconsin are the two other states with the lowest final margins. There, our algorithm makes wrong judgements in more than 40\% situations. 

In the other four states, namely Michigan, Minnesota, North Carolina and Pennsylvania, the results are not good. Albeit the final margins are considerably large, the final conclusions are wrong in at least 18\% of the permuted datasets. This phenomena brings forward an important limitation of the methodology. In our procedure, the key assumption is that the perturbations, albeit random and different in different counties, have the same asymptotic distribution. In such political applications, this assumption may be violated on certain occasions. To demonstrate it empirically, let us look at the margins of differences across the counties in these states (\Cref{fig:usa-plot}).

\begin{figure}[!hbt]
\centering
\includegraphics[width = \textwidth,keepaspectratio]{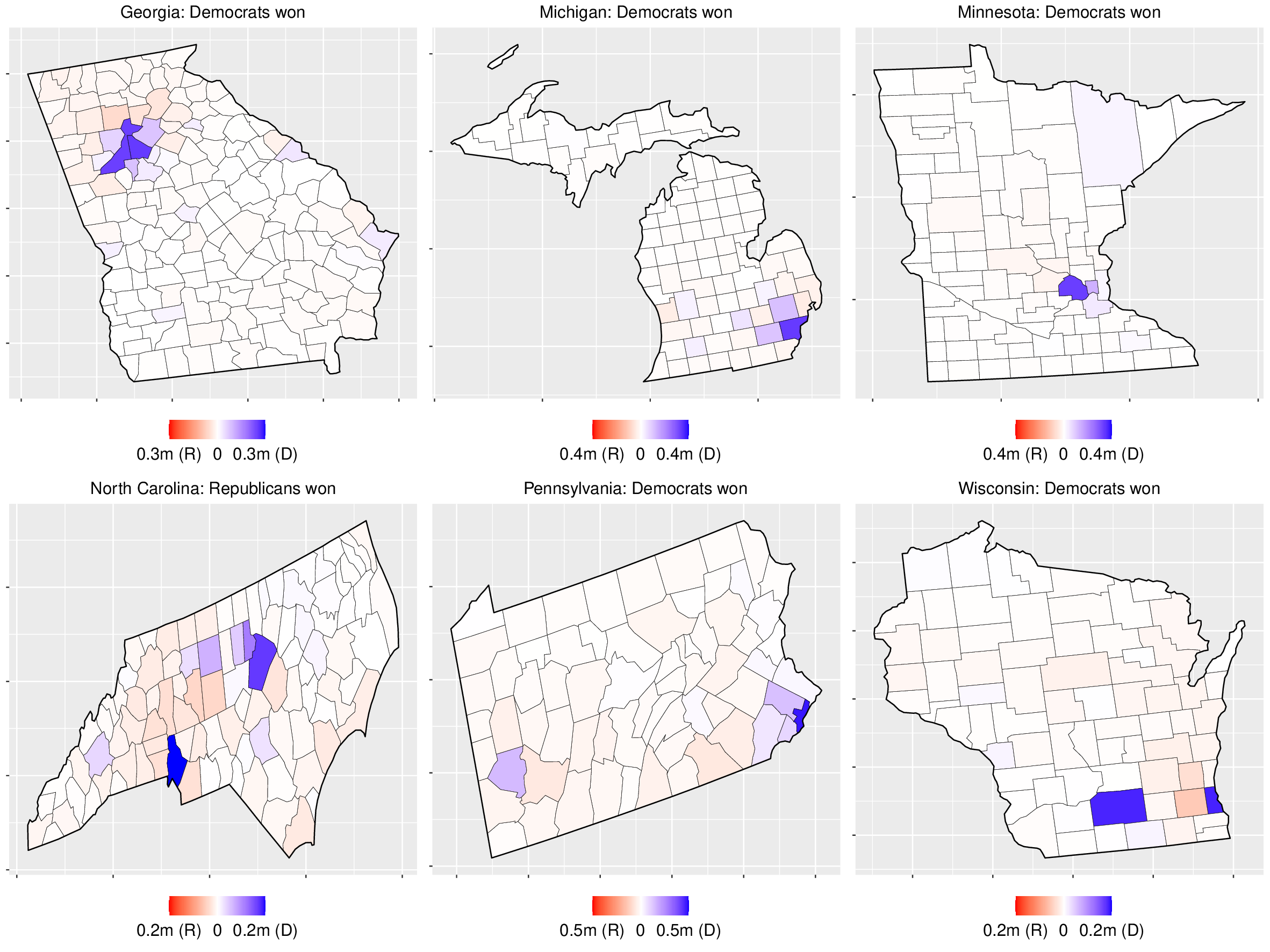}
\caption{County-wise margin of difference in votes for the two main parties (D and R stand for democratic and republican party respectively) in the six states where the method did not perform well. The unit ``m'' indicates millions.}
\label{fig:usa-plot}
\end{figure}

From the figure, it is evident that in all these states (most prominently in Michigan, Minnesota, Georgia and Wisconsin), the final verdict relied heavily on only a couple of counties. Quite naturally, a partial dataset with the information of those particular counties is extremely likely to provide contradictory results to a partial dataset without those counties. In North Carolina, on the other hand, several counties observed high margin of difference in favor of the democratic party, while the final ruling went the opposite way. These situations point to a potential deviation from the aforementioned assumption, and we hypothesize that in such circumstances, the method can be updated to include more information about individual counties. It calls for an interesting future research direction and we discuss it in more detail in the next section.

\section{Conclusions}
\label{sec:conclusion}

In this work, we consider a hierarchical Bayes model for batches of multinomial data with cell probabilities changing randomly across batches. The goal is to predict various properties for aggregate data correctly and as early as possible. We illustrate the effectiveness of the methodology in poll prediction problems which motivated this study. The performance of the methodology is good, especially considering the limited information. We plan to implement the method in future elections. A potential future direction in this specific context would be to improve the method when more information on the constituencies or the rounds of counting are available. It is likely that additional information on relevant covariates would improve forecast accuracy. It is possible to modify the proposed methodology to incorporate the covariates, but that is a more challenging and interesting problem and requires full treatment. We defer it for a future work. 

The proposed model accommodates randomness in the multinomial cell probabilities, depending on the batch. This is pragmatic given the actual data pattern in most practical situations, as otherwise early calls are made which end up being often wrong. This is also intuitively justifiable as, for example, different rounds of votes can have differential probabilities for the candidates, potentially owing to different locations or the changes in people's behaviour.  This paper aims to tackle these situations appropriately.

It is worth mention that a common practice in Bayesian analysis of sequences of multinomial data is to apply Dirichlet priors for the cell probabilities. This approach requires the iid assumption and hence do not work under the modeling framework of this study. In fact, if we assume iid behaviour for the multinomial data across the batches and use that model for the examples discussed above, then the performances are much worse than what we achieve. Further, we want to point out that even without the variance stabilizing transformation, the proposed Bayesian method in conjunction with the results from \Cref{thm:unconditional-x-y} can be used in similar problems. The prediction accuracy of that model is comparable to our approach. We employ the transformation for superior prediction performance, albeit marginally in some cases, across all scenarios.

Finally, note that in some contexts, it may be unrealistic to assume that values of the future $n_j$'s are known. In that case, the decision can be taken by anticipating them to be at the level of the average of $n_j$'s observed so far. It is intuitively clear from \Cref{subsec:prediction}, as one can argue that the effectiveness of the methodology will not alter to any appreciable extent as long as the sample sizes are large. It can also be backed up by relevant simulation studies.

The methodology proposed in this work can be applied in other domain as well, most notably in sales forecasting. Retail giants and companies track the sales for different products in regular basis. This is useful from various perspectives. One of the key aspects is to project or assess the overall (annual) sales pattern from the data collated at monthly, weekly or even daily level.  It is of great value to a company to infer about its eventual market share for the period (year) as early as possible, hence to validate if its (possibly new) marketing plans are as effective as targeted. The retailer would find these inferences useful from the perspective of managing their supply chain, as well as in their own marketing. Other possible applications of the proposed methodology (possibly with necessary adjustments) include analyzing hospital-visits data, in-game sports prediction etc. Some of these applications can make use of additional covariate information; we propose to take that up in a follow-up article.

\section{Proofs}
\label{sec:proofs}

\begin{proof}[\Cref{thm:unconditional-x-y}]
Following \cref{eq:proposed_model}, $\E(X_{lj}\given \varepsilon_{lj})=n_j(p_l+\varepsilon_{lj})$. Then, using \Cref{asmp:exp_var_epsilon} and applying the relationship between total and conditional expectation, 
\begin{equation}
\label{eq:proof_unconditional_exp_x}
\E(X_{lj}) = \E\left[\E(X_{lj}\given \varepsilon_{lj})\right] = n_jp_l.
\end{equation}

We also know that $\var(X_{lj}\given \varepsilon_{lj}) = n_j(p_l+\varepsilon_{lj})(1-p_l-\varepsilon_{lj})$. Then, we can use the relationship $\var(X_{lj}) = \var\left(\E(X_{lj} \given \varepsilon_j)\right) + \E\left(\var(Y_{lj} \given \varepsilon_j)\right)$ to get the following:
\begin{equation}
\label{eq:proof_unconditional_var_x}
\var(X_{lj}) = \var\left(n_j(p_l+\varepsilon_{lj})\right) + \E\left[n_j(p_l+\varepsilon_{lj})(1-p_l-\varepsilon_{lj})\right] = n_j p_l (1 - p_l) + n_j(n_j -1) \var(\varepsilon_{lj}).
\end{equation}

Finally, note that the covariance of $X_{1j}$ and $X_{2j}$, conditional on $\boldeps_{j}$, is equal to $-n_j(p_1+\varepsilon_{1j})(p_2+\varepsilon_{2j})$. One can use this and apply similar technique as before to complete the proof of part (a). 

Part (b) follows from (a) using the fact that the multinomial data in different batches are independent of each other.

To prove part (c), we use the multinomial central limit theorem which states that if $(D_1,D_2,\ldots,D_{C}) \sim \multi(n,\pi_1,\pi_2,\ldots, \pi_C)$, then 
\begin{equation}
\label{eq:multinomial_clt}
\sqrt n\left(\begin{pmatrix} D_1/n \\ D_2/n \\ \vdots \\ D_{C-1}/n \end{pmatrix} - \begin{pmatrix} \pi_1 \\ \pi_2 \\ \vdots \\ \pi_{C-1} \end{pmatrix} \right) \overset{\mathcal{L}}{\rightarrow} \mathcal{N}_2 \left(0, \begin{bmatrix} \pi_1 (1-\pi_1) & -\pi_1 \pi_2 & \hdots & -\pi_1\pi_{C-1} \\ -\pi_1 \pi_2 & \pi_2(1-\pi_2) & \hdots & -\pi_2\pi_{C-1} \\ \vdots & \vdots & \ddots & \vdots \\ -\pi_1 \pi_{C-1} & -\pi_2\pi_{C-1} & \hdots & \pi_{C-1}(1-\pi_{C-1})\end{bmatrix}\right).
\end{equation}

Let us define $V_{\boldeps_j}$ to be the covariance matrix of $(\hat p_{1j}, \hat p_{2j}, \ldots, \hat p_{(C-1)j})$, conditonal on $\boldeps_j$. Then, for the $(c,c')^{th}$ element of $V_{\boldeps_j}$, we have 
\begin{equation*}
(V_{\boldeps_j})_{c,c'} = \begin{cases} (p_c+\varepsilon_{cj}) (1-p_c-\varepsilon_{cj}) & \text{for } c=c', \\ -(p_c+\varepsilon_{cj}) (p_{c'}+\varepsilon_{c'j}) & \text{for } c=c'. \end{cases}
\end{equation*}

Now, under \Cref{asmp:exp_var_epsilon} and from \cref{eq:multinomial_clt}, as $n_j\to\infty$,
\begin{equation}\label{eq:condasymp-p}
\sqrt{n_j} V_{\boldeps_j}^{-1/2}\left(\hat \boldp_j - \tilde \boldp - \boldeps_j \right) \mid \boldeps_j \overset{\mathcal{L}}{\rightarrow} \N_{C-1} \left(\bm{0}, \I \right). 
\end{equation}

Using the assumptions in part (c) of \Cref{thm:unconditional-x-y}, as $n_j \to \infty$, $V_{\boldeps_j}$ converges in probability to $\Xi_p$. Hence, by an application of Slutsky's Theorem, \cref{eq:condasymp-p} is equivalent to: 
\begin{equation}\label{eq:condasymppred}
\sqrt{n_j}\; \Xi_p^{-1/2}\left(\hat \boldp_j - \tilde \boldp - \boldeps_j \right) \mid \boldeps_j \overset{\mathcal{L}}{\rightarrow} \N_{C-1} \left(\bm{0}, \I \right). 
\end{equation}

Let $f_{\boldeps}(\cdot)$ denote the density function of $\sqrt{n_j} \boldeps_j$, $\phi(\cdot,\Xi_{\boldeps})$ be the density function of $\mathcal{N}_{C-1}(\bm 0, \Xi_{\boldeps})$, and $\Phi(\cdot,\Xi_{\boldeps})$ be the distribution function of $\mathcal{N}_{C-1}(\bm 0, \Xi_{\boldeps})$. Note that the previous convergence in \cref{eq:condasymppred} is conditional on $\boldeps_j$, as $n_j \to \infty$. To find the unconditional distribution of $\hat \boldp_j - \tilde \boldp$, we look at the following, for $z \in \mathbb{R}^{C-1}$:  
\begin{equation}\label{eq:unconddisthatp}
\P(\sqrt{n_j} (\hat \boldp_j - \tilde \boldp) \leqslant z) = \int_{\mathbb{R}^{C-1}} \P(\sqrt{n_j}(\hat \boldp_j - \tilde \boldp - \boldeps_j) \leqslant z - y \mid \boldeps_j) f_{\boldeps}(y)\; dy. 
\end{equation}

From the assumptions in part (c) of \Cref{thm:unconditional-x-y}, $\sqrt{n_j} \boldeps_j$ converges in distribution to $\mathcal{N}_{C-1}(\bm 0, \Xi_{\boldeps})$. Then,
\begin{equation}
\lim_{n_j \to \infty} \P(\sqrt{n_j}(\hat \boldp_j - \tilde \boldp - \boldeps_j) \leqslant z - y \mid \boldeps_j) f_{\boldeps}(y)= \Phi(z-y, \Xi_p) \phi(y,\Xi_{\boldeps}).
\end{equation}

A simple application of dominated convergence theorem then implies, as $n_j \to \infty$, 
\begin{equation}\label{eq:convolutionnormal}
\P(\sqrt{n_j} (\hat \boldp_j - \tilde \boldp) \leqslant z) \to \int_{\mathbb{R}^{C-1}} \Phi(z-y, \Xi_p) \phi(y, \Xi_{\boldeps})\; dy.
\end{equation}

Note that the term on the right hand side of \cref{eq:convolutionnormal} is the distribution function of $\mathcal{N}_{C-1}(\bm 0, \Xi_{\boldp}+\Xi_{\boldeps})$ at the point $z \in \R^{C-1}$. Hence we arrive at the required result, \cref{eq:asymptotic-xnj}.
\end{proof}

\begin{proof}[\Cref{thm:posterior-consistency}]
Recall that the sigma field generated by the collection of the data $\{ \L_1, \L_2, \ldots,\L_K \}$ is $\F_K$, and the posterior means of $\boldmu$ and $\Sigma$, conditional on $\F_K$, are $\boldmu_{PM}$ and $\Sigma_{PM}$ respectively.   Then, \cref{eq:postconstmu} and \cref{eq:postconstSigma} can be equivalently stated as, 
\begin{equation}\label{eq:postconstmufilt}
\lim_{K \rightarrow \infty} \E\left[\Pi_K\left( \norm{\boldmu_{PM} - \boldmu_0} > \epsilon \mid \F_K \right)\right] =0,
\end{equation}
\begin{equation}\label{eq:postconstSigmafilt}
\lim_{K \rightarrow \infty} \E\left[\Pi_K\left( \norm{\Sigma_{PM} - \Sigma_0} > \epsilon \mid \F_K \right)\right] =0.
\end{equation}

We begin with the term

\begin{equation}\label{ineq:post_mu_twoparts}
\Pi_K\left( \norm{\boldmu_{PM} - \boldmu_0} > {\epsilon} \mid \F_K \right).
\end{equation} 

We know that $\boldmu_{PM}=\E\left(\left[\Sigma_p^{-1} + N\Sigma^{-1} \right]^{-1} \left[\Sigma_p^{-1} {\boldalpha} + \Sigma^{-1}\left(\sum_{i=1}^K n_i \L_i \right) \right]\mid \F_K\right)$, where the expectation is taken with respect to $\Sigma$. Since \cref{ineq:post_mu_twoparts} is either $1$ or $0$ depending on whether $\norm{\boldmu_{PM} - \boldmu_0} > \epsilon$ or not, on taking expectation over $\F_K$, we obtain 

\begin{equation}\label{ineq:expconstmu}
\P \left(\norm{\boldmu_{PM} - \boldmu_0} > {\epsilon}  \mid \F_K \right).
\end{equation}

To show that \cref{ineq:expconstmu} goes to $0$ as $K \rightarrow \infty$, we do the following: 

\begin{align}
\norm{\boldmu_{PM} - \boldmu_0} &= \norm{\E\left(\left[\Sigma_p^{-1} + N\Sigma^{-1} \right]^{-1} \left[\Sigma_p^{-1} {\boldalpha} + \Sigma^{-1}\left(\sum_{i=1}^K n_i \L_i \right) \right]\mid \F_K\right) - \boldmu_0} \nonumber \\
&\leqslant \E\left(\norm{\left[\Sigma_p^{-1} + N\Sigma^{-1} \right]^{-1} \left[\Sigma_p^{-1} {(\boldalpha-\boldmu_0)} + \Sigma^{-1}\left\{\sum_{i=1}^K n_i (\L_i-\boldmu_0) \right\} \right] }\mid \F_K \right) \nonumber \\
&\leqslant \E \left(\norm{\left[\Sigma_p^{-1} + N\Sigma^{-1} \right]^{-1}\Sigma_p^{-1}(\boldalpha-\boldmu_0)} \mid \F_k\right) \nonumber \\
\label{ineq:post_mu_pm}
& ~ \hspace{0.05in} \;  +\norm{\E\left(\left[\Sigma_p^{-1} + N\Sigma^{-1} \right]^{-1}\Sigma^{-1} \left\{\sum_{i=1}^K n_i (\L_i-\boldmu_0) \right\}\mid \F_K \right)}.
\end{align}

Applying the identity $(A+B)^{-1} = A^{-1} - A^{-1}(A^{-1}+B^{-1})^{-1}A^{-1}$ on the first term of the above,
\begin{equation}
\norm{\left[\Sigma_p^{-1} + N\Sigma^{-1} \right]^{-1}\Sigma_p^{-1}(\boldalpha-\boldmu_0)} = \norm{\left[\I - \Sigma_p\left(\Sigma_p + \frac{\Sigma}{N}\right)^{-1}\right](\boldalpha-\boldmu_0)},
\end{equation}
which goes to 0 a.e., using the properties of inverse Wishart distribution and the assumptions on the prior parameters we have. Consequently, $\E\norm{\left[\Sigma_p^{-1} + N\Sigma^{-1} \right]^{-1}\Sigma_p^{-1}(\boldalpha-\boldmu_0)}$ goes to 0 and hence, we can argue that 
\begin{equation}
\label{eq:convergence_post_mu_secondterm}
\P\left(\E\norm{\left[\Sigma_p^{-1} + N\Sigma^{-1} \right]^{-1}\Sigma_p^{-1}(\boldalpha-\boldmu_0)} > \frac{\epsilon}{2}\right) \to 0.
\end{equation}

For the second term in \cref{ineq:post_mu_pm}, first note that $\sum_{i=1}^K n_i (\L_i-\boldmu_0)$ is equivalent, in distributional sense, to $\sqrt{N}Z$, where $Z\sim \N(\bm{0},\Sigma_0)$. On the other hand,
\begin{equation}\label{ineq:postmu_variance_bound}
\norm{\E\left(\left[\Sigma_p^{-1} + N\Sigma^{-1} \right]^{-1}\Sigma^{-1}\right)} \leqslant \norm{\Sigma_p\E\left[\left(\Sigma + N\Sigma_p \right)^{-1}\right]} \leqslant  C_K\norm{\Sigma_p},
\end{equation} 
where $C_K=O(1/N)$. Using the above results, we can write
\begin{equation}
\P\left[\norm{\E\left(\left[\Sigma_p^{-1} + N\Sigma^{-1} \right]^{-1}\Sigma^{-1} \sqrt{N}Z \mid \F_K \right)} > \frac{\epsilon}{2}\right] \leqslant \P\left[\norm{C_K\sqrt{N}Z} > \frac{\epsilon}{4\norm{\Sigma_p}}\right].
\end{equation}

In light of the fact that $\norm{\Sigma_p}$ is finite and that $C_K=O(1/N)$, it is easy to see that the above probability goes to 0. This, along with \cref{eq:convergence_post_mu_secondterm}, completes the proof that \cref{ineq:expconstmu} goes to $0$ as $K \rightarrow \infty$.

In order to prove \cref{eq:postconstSigmafilt}, we follow a similar idea as above. Note that $\Psi$ is a constant positive definite matrix and $\nu$ is a finite constant. Following \cref{eq:Sigma-posterior}, for large $K$, straightforward calculations yield the following.
\begin{equation}\label{eq:sigma_pm_consistency_step1}
\norm{\Sigma_{PM}-\Sigma_0} = \frac1K \norm{\E\left[\sum_{i=1}^K n_(\L_i-\boldmu)(\L_i-\boldmu)^T - \Sigma_0\mid\F_k\right]}.
\end{equation}

Further, using the true value $\boldmu_0$ inside the term on the right hand side above, we get
\begin{align}
\sum_{i=1}^K n_i(\L_i-\boldmu)(\L_i-\boldmu)^T - \Sigma_0 &= \sum_{i=1}^K n_i(\L_i-\boldmu_0)(\L_i-\boldmu_0)^T - \Sigma_0 + N(\boldmu-\boldmu_0)(\boldmu-\boldmu_0)^T \nonumber \\
\label{eq:sigma_pm_consistency_step2}
& - 2\sum_{i=1}^K n_i(\L_i-\boldmu_0)(\boldmu-\boldmu_0)^T,
\end{align}

and subsequently,

\begin{align}
\P\left(\norm{\Sigma_{PM}-\Sigma_0} > \epsilon\mid\F_k\right) &\leqslant \P\left(\norm{\frac1K\sum_{i=1}^K n_i(\L_i-\boldmu_0)(\L_i-\boldmu_0)^T - \Sigma_0} > \frac{\epsilon}{3}\right) \nonumber \\
& + \P\left(\norm{\frac1K\E\left[N(\boldmu-\boldmu_0)(\boldmu-\boldmu_0)^T\mid\F_k\right]} > \frac{\epsilon}{3}\right) \nonumber \\
\label{ineq:postsigma_pm_threeparts}
& + \P\left(\frac2K\norm{\E\left[\sum_{i=1}^K n_i(\L_i-\boldmu_0)(\boldmu-\boldmu_0)^T\mid\F_k\right]} > \frac{\epsilon}{3}\right).
\end{align}

Using similar arguments as in \cref{ineq:postmu_variance_bound}, we can show that $\E[N(\boldmu-\boldmu_0)(\boldmu-\boldmu_0)^T \mid \F_k]$ is bounded and therefore, the second term goes to 0. Next, taking cue from the previous part regarding the consistency of $\boldmu$, it is also easy to see that for large $K$, $\E[n_i(\L_i-\boldmu_0)(\boldmu-\boldmu_0)^T\mid\F_k]$ is small enough. Hence, the third term in \cref{ineq:postsigma_pm_threeparts} also goes to 0. Finally, for the first term in that inequality, observe that $\sqrt{n_i}(\L_i-\boldmu_0) \sim \N_{C-1}(0,\Sigma_0)$ for all $i$. Thus, $W = \sum_{i=1}^K n_i(\L_i-\boldmu_0)(\L_i-\boldmu_0)^T \sim \mbox{Wishart}(\Sigma_0,K)$, and the law of large numbers implies that $W/K-\Sigma_0 \to 0$ in probability. Consequently, we get that $\P\left(\norm{\Sigma_{PM}-\Sigma_0} > \epsilon\right) \to 0$ and that completes the proof.
\end{proof}

\newpage
\setcounter{page}{1}

\section*{Appendix}

\appendix

\setcounter{table}{0}
\renewcommand{\thetable}{A\arabic{table}}
\setcounter{figure}{0}
\renewcommand{\thefigure}{A\arabic{figure}}

\section{Additional tables}\label{sec:additional-results}

\begin{table}[!ht]
\centering
\caption{Accuracy (in \%) of predicting the category with maximum count (corresponding final margins, averaged over all repetitions, are given in parentheses) for different values of $C,K,n$, when data are generated from SEO1.}
\label{tab:simulation-accuracy-margin-dgp1-detail}
\begin{tabular}{|llllll|}
\hline 
$C$ & $K$ & $n$   & Correct (average margin) & Incorrect (average margin) & No call (average margin) \\
\hline
3 & 25 & 100 & 95.8\% (1158) & 0.4\% (91) & 3.8\% (120) \\ 
3 & 25 & 1000 & 99.2\% (11894) & 0.2\% (255) & 0.6\% (399) \\ 
3 & 25 & 5000 & 98.62\% (54870) & 0.92\% (454) & 0.46\% (1108) \\ 
3 & 25 & 50000 & 100\% (549287) &  &  \\ 
\hline
3 & 50 & 100 & 95.75\% (2384) & 0.75\% (104) & 3.5\% (125) \\ 
3 & 50 & 1000 & 99\% (23425) &  & 1\% (346) \\ 
3 & 50 & 5000 & 99.5\% (114905) &  & 0.5\% (1581) \\ 
3 & 50 & 50000 & 100\% (1139338) &  &  \\ 
\hline
5 & 25 & 100 & 93.75\% (744) & 2.25\% (27) & 4\% (81) \\ 
5 & 25 & 1000 & 98.25\% (7395) & 0.25\% (38) & 1.5\% (368) \\ 
5 & 25 & 5000 & 98.75\% (38799) & 0.25\% (330) & 1\% (788) \\ 
5 & 25 & 50000 & 99.75\% (373575) &  & 0.25\% (2292) \\ 
\hline
5 & 50 & 100 & 94.5\% (1584) & 1.5\% (73) & 4\% (168) \\ 
5 & 50 & 1000 & 98.25\% (14388) & 0.5\% (516) & 1.25\% (378) \\ 
5 & 50 & 5000 & 99.5\% (71562) & 0.25\% (1461) & 0.25\% (858) \\ 
5 & 50 & 50000 & 100\% (730428) &  &  \\
\hline
\end{tabular}
\end{table}

\newpage

\begin{table}[!ht]
\centering
\caption{Accuracy (in \%) of predicting the category with maximum count (corresponding final margins, averaged over all repetitions, are given in parentheses) for different values of $C,K,n$, when data are generated from SEO2.}
\label{tab:simulation-accuracy-margin-dgp2-detail}
\begin{tabular}{|llllll|}
\hline 
$C$ & $K$ & $n$   & Correct (average margin) & Incorrect (average margin) & No call (average margin) \\
\hline
3 & 25 & 100 & 72.6\% (626) & 8\% (204) & 19.4\% (370) \\ 
3 & 25 & 1000 & 89.6\% (9374) & 2.8\% (2174) & 7.6\% (2866) \\ 
3 & 25 & 5000 & 89.6\% (54572) & 3.2\% (5723) & 7.2\% (8492) \\ 
3 & 25 & 50000 & 98.2\% (551259) & 1\% (20745) & 0.8\% (40335) \\ 
\hline
3 & 50 & 100 & 77.5\% (1227) & 6.75\% (355) & 15.75\% (480) \\ 
3 & 50 & 1000 & 88\% (18912) & 4\% (3189) & 8\% (3772) \\ 
3 & 50 & 5000 & 93.25\% (100611) & 3.75\% (7688) & 3\% (15393) \\ 
3 & 50 & 50000 & 96.75\% (1098195) & 1\% (41312) & 2.25\% (96337) \\ 
\hline
5 & 25 & 100 & 62.75\% (298) & 13.75\% (146) & 23.5\% (153) \\ 
5 & 25 & 1000 & 77.5\% (5681) & 8.75\% (1067) & 13.75\% (1758) \\ 
5 & 25 & 5000 & 85.75\% (33422) & 5.5\% (4861) & 8.75\% (7226) \\ 
5 & 25 & 50000 & 93\% (371349) & 2\% (32987) & 5\% (50911) \\ 
\hline
5 & 50 & 100 & 70\% (605) & 8.25\% (205) & 21.75\% (314) \\ 
5 & 50 & 1000 & 84.75\% (10986) & 4.75\% (1286) & 10.5\% (2891) \\ 
5 & 50 & 5000 & 91.75\% (67476) & 3.25\% (6624) & 5\% (12241) \\ 
5 & 50 & 50000 & 95.5\% (720905) & 2.25\% (71467) & 2.25\% (77728) \\
\hline
\end{tabular}
\end{table}

\newpage

\begin{table}[!ht]
\centering
\caption{Accuracy (in \%) of predicting the category with maximum count (corresponding final margins, averaged over all repetitions, are given in parentheses) for different values of $C,K,n$, when data are generated from SEO3.}
\label{tab:simulation-accuracy-margin-dgp3-detail}
\begin{tabular}{|llllll|}
\hline 
$C$ & $K$ & $n$   & Correct (average margin) & Incorrect (average margin) & No call (average margin) \\
\hline
3 & 25 & 100 & 43\% (230) & 23\% (123) & 34\% (163) \\ 
3 & 25 & 1000 & 49.8\% (3558) & 20.2\% (1948) & 30\% (2425) \\ 
3 & 25 & 5000 & 58.2\% (22490) & 15\% (10192) & 26.8\% (14920) \\ 
3 & 25 & 50000 & 73.8\% (323688) & 8.8\% (100474) & 17.4\% (187450) \\ 
\hline
3 & 50 & 100 & 43.2\% (414) & 18\% (258) & 38.8\% (297) \\ 
3 & 50 & 1000 & 56.2\% (6696) & 13.4\% (3131) & 30.4\% (4367) \\ 
3 & 50 & 5000 & 64.6\% (45409) & 7.2\% (16465) & 28.2\% (27686) \\ 
3 & 50 & 50000 & 79.4\% (700278) & 5.2\% (141302) & 15.4\% (270719) \\ 
\hline
5 & 25 & 100 & 40\% (91) & 26\% (64) & 34\% (47) \\ 
5 & 25 & 1000 & 42.6\% (1323) & 26.2\% (772) & 31.2\% (755) \\ 
5 & 25 & 5000 & 51\% (8658) & 24\% (5588) & 25\% (5423) \\ 
5 & 25 & 50000 & 65.6\% (154765) & 10.6\% (80310) & 23.8\% (81664) \\ 
\hline
5 & 50 & 100 & 42.4\% (154) & 22\% (106) & 35.6\% (110) \\ 
5 & 50 & 1000 & 45\% (2439) & 23.2\% (1423) & 31.8\% (1536) \\ 
5 & 50 & 5000 & 55.8\% (18972) & 14.2\% (9671) & 30\% (10822) \\
5 & 50 & 50000 & 70\% (305700) & 8.8\% (79016) & 21.2\% (184868) \\ 
\hline
\end{tabular}
\end{table}

\newpage

\begin{center}
\begin{longtable}{|lcc|cccc|c|}
\caption{Detailed results from all constituencies in Bihar.}\label{tab:bihar-fullresults} \\
\hline
\multicolumn{3}{|c|}{True data} & \multicolumn{4}{c|}{At the time of calling the race} & \\
Constituency & Rounds & Final margin & Round & Votes left (\%) & Lead & Predicted margin & Decision \\ 
\hline
\endhead
\hline
\multicolumn{8}{r@{}}{continued \ldots}\\
\endfoot
\hline
\endlastfoot
Agiaon & 28 & 48165 & 14 & 49.1 & 21498 & 42268 & Correct \\ 
Alamnagar & 37 & 29095 & 19 & 49.0 & 16498 & 32418 & Correct \\ 
Alauli & 25 & 2564 & 13 & 47.7 & 7899 & 14798 & Correct \\ 
Alinagar & 29 & 3370 & 28 & 0.8 & 3439 & 3450 & Correct \\ 
Amarpur & 31 & 3242 & 30 & 2.5 & 2341 & 2370 & Correct \\ 
Amnour & 29 & 3824 & 24 & 12.8 & 4556 & 5177 & Correct \\ 
Amour & 34 & 52296 & 17 & 49.3 & 28624 & 55829 & Correct \\ 
Araria & 35 & 47828 & 17 & 48.0 & 21851 & 41762 & Correct \\ 
Arrah & 36 & 3105 & 29 & 13.9 & 6758 & 7970 & Correct \\ 
Arwal & 30 & 19651 & 13 & 49.1 & 10167 & 19375 & Correct \\ 
Asthawan & 31 & 11530 & 16 & 46.9 & 9270 & 17362 & Correct \\ 
Atri & 34 & 7578 & 17 & 46.5 & 3218 & 5957 & Correct \\ 
Aurai & 33 & 48008 & 16 & 47.4 & 24134 & 45972 & Correct \\ 
Aurangabad & 34 & 2063 & 15 & 49.7 & 2754 & 5676 & Correct \\ 
Babubarhi & 32 & 12022 & 16 & 49.3 & 10414 & 20871 & Correct \\ 
Bachhwara & 32 & 737 & 15 & 49.7 & 5631 & 10832 & Incorrect \\ 
Bagaha & 33 & 30494 & 17 & 47.0 & 21112 & 39649 & Correct \\ 
Bahadurganj & 33 & 44978 & 17 & 47.4 & 14130 & 26583 & Correct \\ 
Bahadurpur & 34 & 2815 & 31 & 1.4 & 1965 & 1981 & Correct \\ 
Baikunthpur & 33 & 10805 & 18 & 43.1 & 5730 & 10274 & Correct \\ 
Baisi & 30 & 16312 & 15 & 46.6 & 6761 & 12218 & Incorrect \\ 
Bajpatti & 34 & 2325 & 29 & 11.2 & 4225 & 4778 & Correct \\ 
Bakhri & 28 & 439 &  &  &  &  & No call \\ 
Bakhtiarpur & 30 & 20694 & 15 & 47.9 & 7992 & 14667 & Correct \\ 
Balrampur & 35 & 53078 & 18 & 45.6 & 6706 & 12218 & Correct \\ 
Baniapur & 34 & 27219 & 17 & 48.7 & 19610 & 38052 & Correct \\ 
Banka & 27 & 17093 & 14 & 47.9 & 12133 & 23154 & Correct \\ 
Bankipur & 45 & 38965 & 22 & 49.3 & 22243 & 43644 & Correct \\ 
Banmankhi & 33 & 27872 & 16 & 47.7 & 11958 & 23007 & Correct \\ 
Barachatti & 33 & 6737 & 20 & 39.2 & 5916 & 9722 & Correct \\ 
Barari & 28 & 10847 & 14 & 46.7 & 26361 & 49359 & Correct \\ 
Barauli & 33 & 14493 & 16 & 47.1 & 4342 & 8183 & Correct \\ 
Barbigha & 25 & 238 &  &  &  &  & No call \\ 
Barh & 30 & 10084 & 21 & 28.5 & 7709 & 10450 & Correct \\ 
Barhara & 32 & 4849 & 28 & 12.1 & 5190 & 5932 & Correct \\ 
Barharia & 33 & 3220 & 16 & 50.0 & 8954 & 17765 & Correct \\ 
Baruraj & 30 & 43548 & 15 & 46.3 & 26826 & 49863 & Correct \\ 
Bathnaha & 32 & 47136 & 17 & 46.5 & 26047 & 48536 & Correct \\ 
Begusarai & 38 & 5392 & 17 & 50.0 & 4814 & 9165 & Incorrect \\ 
Belaganj & 32 & 23516 & 16 & 47.2 & 18889 & 35735 & Correct \\ 
Beldaur & 33 & 5289 & 16 & 48.4 & 5233 & 10518 & Correct \\ 
Belhar & 32 & 2713 & 17 & 48.6 & 6701 & 13164 & Correct \\ 
Belsand & 28 & 13685 & 13 & 49.2 & 6577 & 12898 & Correct \\ 
Benipatti & 32 & 32904 & 16 & 49.6 & 20784 & 41325 & Correct \\ 
Benipur & 31 & 6793 & 25 & 18.2 & 6957 & 8499 & Correct \\ 
Bettiah & 30 & 18375 & 17 & 46.8 & 4802 & 9154 & Correct \\ 
Bhabua & 29 & 9447 & 15 & 48.8 & 4971 & 9665 & Correct \\ 
Bhagalpur & 36 & 950 & 20 & 40.6 & 9662 & 15892 & Correct \\ 
Bhorey & 37 & 1026 & 19 & 48.5 & 4141 & 7857 & Correct \\ 
Bibhutipur & 30 & 40369 & 14 & 49.4 & 17213 & 33918 & Correct \\ 
Bihariganj & 34 & 19459 & 18 & 45.0 & 7497 & 13702 & Correct \\ 
Biharsharif & 41 & 15233 & 19 & 47.5 & 7362 & 14029 & Correct \\ 
Bihpur & 30 & 6348 & 16 & 46.6 & 9088 & 16985 & Correct \\ 
Bikram & 35 & 35390 & 17 & 49.9 & 21569 & 42937 & Correct \\ 
Bisfi & 36 & 10469 & 19 & 45.6 & 8169 & 14305 & Correct \\ 
Bochaha & 30 & 11615 & 15 & 48.6 & 7128 & 13904 & Correct \\ 
Bodh Gaya & 34 & 4275 & 17 & 47.0 & 2704 & 5174 & Incorrect \\ 
Brahampur & 37 & 50537 & 18 & 47.4 & 23505 & 44079 & Correct \\ 
Buxar & 30 & 3351 & 28 & 6.8 & 2508 & 2680 & Correct \\ 
Chainpur & 36 & 23650 & 18 & 48.2 & 10926 & 21057 & Correct \\ 
Chakai & 32 & 654 & 17 & 47.5 & 3278 & 6470 & Correct \\ 
Chanpatia & 27 & 13680 & 14 & 47.2 & 8080 & 15260 & Correct \\ 
Chapra & 36 & 7222 & 30 & 14.6 & 6135 & 7234 & Correct \\ 
Chenari & 33 & 17489 & 16 & 49.6 & 11612 & 23038 & Correct \\ 
Cheria Bariarpur & 29 & 40379 & 15 & 47.4 & 19725 & 37310 & Correct \\ 
Chhatapur & 32 & 20858 & 16 & 48.6 & 15892 & 30405 & Correct \\ 
Chiraiya & 31 & 17216 & 15 & 49.4 & 10055 & 20033 & Correct \\ 
Danapur & 38 & 16005 & 19 & 47.9 & 18866 & 35408 & Correct \\ 
Darauli & 34 & 11771 & 16 & 49.8 & 7508 & 14898 & Correct \\ 
Daraundha & 35 & 11492 & 20 & 41.4 & 4388 & 7409 & Correct \\ 
Darbhanga Rural & 30 & 2019 & 14 & 49.6 & 6509 & 13277 & Correct \\ 
Darbhanga & 32 & 10870 & 16 & 47.4 & 11923 & 22419 & Correct \\ 
Dehri & 32 & 81 & 15 & 46.7 & 5497 & 10371 & Correct \\ 
Dhaka & 34 & 10396 & 17 & 47.5 & 18059 & 33826 & Correct \\ 
Dhamdaha & 35 & 33701 & 18 & 48.1 & 24815 & 47600 & Correct \\ 
Dhauraiya & 31 & 2687 & 15 & 49.8 & 9155 & 18294 & Incorrect \\ 
Digha & 51 & 46073 & 25 & 49.3 & 23335 & 45542 & Correct \\ 
Dinara & 32 & 7896 & 16 & 47.9 & 4136 & 8036 & Incorrect \\ 
Dumraon & 34 & 23854 & 17 & 47.4 & 14094 & 26750 & Correct \\ 
Ekma & 33 & 13683 & 17 & 48.0 & 5019 & 9433 & Correct \\ 
Fatuha & 29 & 19407 & 15 & 47.1 & 11926 & 21926 & Correct \\ 
Forbesganj & 36 & 19749 & 18 & 49.3 & 9320 & 18607 & Correct \\ 
Gaighat & 33 & 7345 & 27 & 17.4 & 4232 & 5180 & Correct \\ 
Garkha & 33 & 9746 & 19 & 39.7 & 6592 & 10794 & Correct \\ 
Gaura Bauram & 25 & 7519 & 13 & 47.7 & 8065 & 15274 & Correct \\ 
Gaya Town & 29 & 12123 & 15 & 48.7 & 9590 & 18989 & Correct \\ 
Ghosi & 28 & 17804 & 14 & 49.3 & 17071 & 33251 & Correct \\ 
Gobindpur & 35 & 32776 & 18 & 48.3 & 21332 & 41204 & Correct \\ 
Goh & 32 & 35377 & 17 & 47.5 & 15573 & 29875 & Correct \\ 
Gopalganj & 37 & 36641 & 18 & 48.6 & 16397 & 31998 & Correct \\ 
Gopalpur & 30 & 24580 & 15 & 49.3 & 9626 & 19022 & Correct \\ 
Goriakothi & 35 & 12345 & 18 & 47.3 & 10776 & 20645 & Correct \\ 
Govindganj & 28 & 27924 & 14 & 46.7 & 10488 & 19611 & Correct \\ 
Gurua & 31 & 6152 & 16 & 47.9 & 9515 & 18353 & Correct \\ 
Hajipur & 35 & 3248 & 19 & 47.7 & 6633 & 12491 & Correct \\ 
Harlakhi & 32 & 17815 & 15 & 49.6 & 9927 & 19681 & Correct \\ 
Harnaut & 33 & 27050 & 16 & 48.4 & 11151 & 21626 & Correct \\ 
Harsidhi & 28 & 16071 & 14 & 49.7 & 11104 & 21881 & Correct \\ 
Hasanpur & 30 & 21039 & 18 & 39.7 & 8936 & 14705 & Correct \\ 
Hathua & 32 & 30237 & 16 & 48.1 & 18605 & 35616 & Correct \\ 
Hayaghat & 25 & 10420 & 23 & 6.4 & 5735 & 6090 & Correct \\ 
Hilsa & 33 & 13 & 20 & 33.7 & 4433 & 6678 & Incorrect \\ 
Hisua & 41 & 16775 & 20 & 48.1 & 15057 & 28784 & Correct \\ 
Imamganj & 32 & 16177 & 16 & 48.5 & 7363 & 13978 & Correct \\ 
Islampur & 31 & 3767 & 29 & 4.8 & 3297 & 3474 & Correct \\ 
Jagdishpur & 32 & 21492 & 16 & 47.8 & 8949 & 17206 & Correct \\ 
Jale & 33 & 21926 & 17 & 47.3 & 6953 & 13395 & Correct \\ 
Jamalpur & 33 & 4468 & 21 & 36.2 & 4626 & 7197 & Correct \\ 
Jamui & 32 & 41009 & 17 & 47.9 & 25100 & 48494 & Correct \\ 
Jehanabad & 33 & 33399 & 16 & 47.8 & 17010 & 32635 & Correct \\ 
Jhajha & 35 & 1779 & 17 & 47.3 & 8142 & 15372 & Correct \\ 
Jhanjharpur & 34 & 41861 & 17 & 49.7 & 28969 & 57628 & Correct \\ 
Jokihat & 31 & 7543 & 18 & 39.8 & 3951 & 6575 & Correct \\ 
Kadwa & 30 & 31919 & 15 & 47.4 & 13917 & 26993 & Correct \\ 
Kahalgaon & 36 & 42947 & 18 & 47.5 & 26297 & 49848 & Correct \\ 
Kalyanpur (A) & 34 & 10329 & 17 & 49.0 & 9814 & 18960 & Correct \\ 
Kalyanpur (B) & 27 & 852 & 13 & 48.1 & 4335 & 8519 & Correct \\ 
Kanti & 33 & 10254 & 25 & 23.2 & 5588 & 7257 & Correct \\ 
Karakat & 36 & 17819 & 17 & 48.8 & 5691 & 11004 & Correct \\ 
Kargahar & 35 & 3666 & 32 & 5.0 & 3995 & 4226 & Correct \\ 
Kasba & 31 & 17081 & 21 & 32.2 & 11768 & 17215 & Correct \\ 
Katihar & 29 & 11183 & 18 & 37.8 & 6557 & 10741 & Correct \\ 
Katoria & 27 & 6704 & 24 & 8.6 & 5688 & 6253 & Correct \\ 
Keoti & 31 & 5267 & 26 & 13.3 & 6814 & 7892 & Correct \\ 
Kesaria & 29 & 9352 & 14 & 48.8 & 5048 & 9796 & Correct \\ 
Khagaria & 28 & 2661 & 24 & 9.1 & 3550 & 3921 & Correct \\ 
Khajauli & 31 & 23037 & 16 & 48.9 & 18093 & 35361 & Correct \\ 
Kishanganj & 33 & 1221 & 16 & 48.5 & 7588 & 14366 & Correct \\ 
Kochadhaman & 26 & 36072 & 14 & 47.1 & 25184 & 46871 & Correct \\ 
Korha & 30 & 29007 & 15 & 47.7 & 23079 & 43352 & Correct \\ 
Kuchaikote & 36 & 20753 & 17 & 49.4 & 12435 & 24464 & Correct \\ 
Kumhrar & 49 & 26466 & 26 & 49.4 & 25992 & 51127 & Correct \\ 
Kurhani & 33 & 480 & 15 & 50.0 & 6379 & 12687 & Correct \\ 
Kurtha & 26 & 27542 & 13 & 49.6 & 12732 & 25261 & Correct \\ 
Kusheshwarasthan & 27 & 7376 & 14 & 47.1 & 5444 & 10264 & Correct \\ 
Kutumba & 28 & 16330 & 15 & 46.7 & 9879 & 18414 & Correct \\ 
Lakhisarai & 45 & 10709 & 22 & 48.0 & 5317 & 10227 & Correct \\ 
Lalganj & 34 & 26613 & 17 & 47.9 & 18727 & 36231 & Correct \\ 
Laukaha & 37 & 9471 & 18 & 48.2 & 10665 & 20498 & Correct \\ 
Lauriya & 27 & 29172 & 14 & 47.9 & 14345 & 27468 & Correct \\ 
Madhepura & 35 & 15072 & 16 & 49.6 & 6466 & 12643 & Correct \\ 
Madhuban & 27 & 6115 & 18 & 34.1 & 4790 & 7219 & Correct \\ 
Madhubani & 37 & 6490 & 18 & 48.7 & 7292 & 13924 & Correct \\ 
Maharajganj & 34 & 1638 & 31 & 3.1 & 1271 & 1291 & Correct \\ 
Mahishi & 31 & 1972 & 16 & 48.0 & 9344 & 17525 & Incorrect \\ 
Mahnar & 31 & 7781 & 16 & 46.9 & 4244 & 7697 & Correct \\ 
Mahua & 30 & 13687 & 15 & 48.6 & 7468 & 14255 & Correct \\ 
Makhdumpur & 26 & 21694 & 14 & 46.5 & 13559 & 24842 & Correct \\ 
Maner & 35 & 32919 & 18 & 47.6 & 24582 & 46654 & Correct \\ 
Manihari & 31 & 20679 & 14 & 49.3 & 9831 & 19437 & Correct \\ 
Manjhi & 32 & 25154 & 16 & 49.1 & 5858 & 11338 & Correct \\ 
Marhaura & 28 & 10966 & 18 & 34.7 & 3863 & 5834 & Correct \\ 
Masaurhi & 37 & 32161 & 19 & 48.7 & 16543 & 32147 & Correct \\ 
Matihani & 38 & 65 & 19 & 48.3 & 5017 & 9799 & Incorrect \\ 
Minapur & 30 & 15321 & 15 & 48.8 & 6930 & 13343 & Correct \\ 
Mohania & 30 & 11100 & 15 & 46.3 & 5643 & 10447 & Correct \\ 
Mohiuddinnagar & 29 & 15195 & 21 & 23.6 & 6799 & 9229 & Correct \\ 
Mokama & 31 & 35634 & 16 & 48.1 & 21745 & 41507 & Correct \\ 
Morwa & 29 & 10550 & 16 & 40.9 & 5802 & 9608 & Correct \\ 
Motihari & 33 & 14987 & 18 & 49.8 & 11764 & 23447 & Correct \\ 
Munger & 37 & 1346 & 36 & 0.4 & 885 & 888 & Correct \\ 
Muzaffarpur & 34 & 6132 & 20 & 41.9 & 5124 & 8643 & Correct \\ 
Nabinagar & 30 & 19926 & 14 & 49.1 & 14704 & 28899 & Correct \\ 
Nalanda & 32 & 15878 & 15 & 49.6 & 5022 & 9756 & Correct \\ 
Narkatia & 30 & 27377 & 15 & 49.1 & 16619 & 32266 & Correct \\ 
Narkatiaganj & 28 & 21519 & 14 & 48.1 & 11391 & 22076 & Correct \\ 
Narpatganj & 34 & 28681 & 16 & 49.5 & 12358 & 24689 & Correct \\ 
Nathnagar & 36 & 7481 & 26 & 24.8 & 10089 & 13833 & Correct \\ 
Nautan & 30 & 26106 & 16 & 46.6 & 15003 & 28245 & Correct \\ 
Nawada & 37 & 25835 & 19 & 42.7 & 6251 & 10608 & Correct \\ 
Nirmali & 31 & 44195 & 16 & 49.7 & 20916 & 41440 & Correct \\ 
Nokha & 31 & 17212 & 16 & 48.2 & 9007 & 17107 & Correct \\ 
Obra & 32 & 22233 & 16 & 49.9 & 8742 & 17256 & Correct \\ 
Paliganj & 30 & 30928 & 15 & 48.0 & 17531 & 33748 & Correct \\ 
Parbatta & 33 & 1178 & 32 & 0.2 & 1511 & 1511 & Correct \\ 
Parihar & 32 & 1729 & 16 & 48.6 & 3935 & 7704 & Correct \\ 
Paroo & 32 & 14722 & 16 & 49.5 & 4446 & 8661 & Correct \\ 
Parsa & 27 & 16947 & 14 & 46.7 & 8660 & 15974 & Correct \\ 
Patepur & 31 & 25958 & 16 & 47.7 & 13250 & 25623 & Correct \\ 
Patna Sahib & 39 & 18281 & 21 & 48.9 & 8388 & 16381 & Correct \\ 
Phulparas & 36 & 11198 & 18 & 49.3 & 8810 & 17390 & Correct \\ 
Phulwari & 38 & 13870 & 20 & 47.6 & 12749 & 24323 & Correct \\ 
Pipra (A) & 36 & 8605 & 18 & 49.6 & 6769 & 13534 & Correct \\ 
Pipra (B) & 30 & 19716 & 15 & 48.1 & 7511 & 14380 & Correct \\ 
Pirpainti & 35 & 26994 & 18 & 48.9 & 17639 & 34501 & Correct \\ 
Pranpur & 32 & 3266 & 16 & 46.7 & 10395 & 18800 & Correct \\ 
Purnia & 35 & 32288 & 20 & 48.3 & 24222 & 45472 & Correct \\ 
Rafiganj & 34 & 9219 & 17 & 48.4 & 10410 & 20108 & Correct \\ 
Raghopur & 38 & 37820 & 19 & 47.3 & 9179 & 17353 & Correct \\ 
Raghunathpur & 33 & 17578 & 15 & 48.6 & 13250 & 25591 & Correct \\ 
Rajapakar & 28 & 1497 & 27 & 0.3 & 1728 & 1744 & Correct \\ 
Rajauli & 37 & 12166 & 19 & 47.2 & 5413 & 10246 & Correct \\ 
Rajgir & 33 & 16132 & 17 & 46.5 & 5070 & 9708 & Correct \\ 
Rajnagar & 34 & 19388 & 17 & 49.0 & 9794 & 19390 & Correct \\ 
Rajpur & 34 & 20565 & 17 & 48.8 & 11750 & 23016 & Correct \\ 
Ramgarh & 32 & 305 & 15 & 47.8 & 4488 & 8404 & Correct \\ 
Ramnagar & 32 & 16087 & 16 & 47.9 & 11121 & 21162 & Correct \\ 
Raniganj & 36 & 2395 & 34 & 1.5 & 2550 & 2583 & Correct \\ 
Raxaul & 28 & 37094 & 14 & 49.4 & 23282 & 45791 & Correct \\ 
Riga & 33 & 32851 & 18 & 48.7 & 19030 & 37330 & Correct \\ 
Rosera & 35 & 35814 & 19 & 47.0 & 30151 & 56895 & Correct \\ 
Runnisaidpur & 30 & 24848 & 16 & 46.4 & 22558 & 41764 & Correct \\ 
Rupauli & 33 & 19343 & 16 & 50.0 & 15912 & 31754 & Correct \\ 
Saharsa & 38 & 20177 & 20 & 49.3 & 23357 & 46054 & Correct \\ 
Sahebganj & 32 & 15393 & 16 & 48.4 & 5297 & 10446 & Correct \\ 
Sahebpur Kamal & 26 & 13846 & 13 & 48.7 & 9133 & 17390 & Correct \\ 
Sakra & 28 & 1742 & 18 & 30.7 & 4361 & 6242 & Correct \\ 
Samastipur & 30 & 4588 & 26 & 9.1 & 5765 & 6425 & Correct \\ 
Sandesh & 30 & 50109 & 16 & 48.0 & 26711 & 51168 & Correct \\ 
Sarairanjan & 31 & 3722 & 15 & 49.6 & 3992 & 8100 & Correct \\ 
Sasaram & 36 & 25779 & 20 & 48.3 & 19372 & 37283 & Correct \\ 
Shahpur & 36 & 22384 & 18 & 49.6 & 5654 & 11182 & Correct \\ 
Sheikhpura & 28 & 6003 & 21 & 20.6 & 4134 & 5195 & Correct \\ 
Sheohar & 32 & 36461 & 16 & 47.7 & 16516 & 31584 & Correct \\ 
Sherghati & 30 & 16449 & 15 & 49.8 & 10533 & 21095 & Correct \\ 
Sikandra & 32 & 5668 & 21 & 34.4 & 5237 & 7854 & Correct \\ 
Sikta & 30 & 2080 & 21 & 24.5 & 4218 & 5570 & Incorrect \\ 
Sikti & 29 & 13716 & 14 & 49.5 & 15465 & 29903 & Correct \\ 
Simri Bakhtiarpur & 36 & 1470 & 34 & 1.0 & 1995 & 2017 & Correct \\ 
Singheshwar & 33 & 4995 & 29 & 6.8 & 5231 & 5640 & Correct \\ 
Sitamarhi & 31 & 11946 & 27 & 8.7 & 7984 & 8768 & Correct \\ 
Siwan & 36 & 1561 & 16 & 48.1 & 5874 & 11223 & Incorrect \\ 
Sonbarsha & 32 & 13732 & 17 & 46.6 & 4935 & 9348 & Correct \\ 
Sonepur & 29 & 6557 & 15 & 47.3 & 4988 & 9105 & Incorrect \\ 
Sugauli & 30 & 3045 & 16 & 48.6 & 4810 & 9193 & Correct \\ 
Sultanganj & 35 & 11603 & 28 & 16.6 & 6231 & 7414 & Correct \\ 
Supaul & 29 & 28246 & 15 & 49.2 & 10219 & 20448 & Correct \\ 
Surajgarha & 40 & 9327 & 19 & 50.0 & 4647 & 9650 & Incorrect \\ 
Sursand & 35 & 9242 & 17 & 49.0 & 8237 & 15987 & Correct \\ 
Taraiya & 31 & 11542 & 15 & 49.9 & 9042 & 18196 & Correct \\ 
Tarapur & 34 & 7256 & 22 & 31.1 & 5871 & 8452 & Correct \\ 
Tarari & 36 & 10598 & 21 & 33.4 & 3105 & 4683 & Correct \\ 
Teghra & 31 & 47495 & 16 & 47.8 & 19584 & 37529 & Correct \\ 
Thakurganj & 31 & 23509 & 16 & 49.3 & 11008 & 21202 & Correct \\ 
Tikari & 35 & 2745 & 17 & 49.7 & 5312 & 10259 & Incorrect \\ 
Triveniganj & 29 & 3402 & 28 & 2.8 & 2499 & 2552 & Correct \\ 
Ujiarpur & 32 & 23010 & 16 & 48.5 & 16022 & 30998 & Correct \\ 
Vaishali & 34 & 7629 & 17 & 48.0 & 5278 & 10524 & Correct \\ 
Valmikinagar & 33 & 21825 & 16 & 47.0 & 15531 & 29394 & Correct \\ 
Warisnagar & 35 & 13913 & 17 & 47.9 & 6403 & 12099 & Correct \\ 
Warsaliganj & 38 & 9073 & 19 & 49.2 & 8982 & 17503 & Correct \\ 
Wazirganj & 34 & 22422 & 16 & 49.6 & 12787 & 25470 & Correct \\ 
Ziradei & 30 & 25156 & 15 & 47.0 & 13708 & 25846 & Correct \\ 
\hline
\end{longtable}
\end{center}

\end{document}